\newtheorem{theorem}{Theorem}
\newtheorem{lemma}{Lemma}
\newtheorem{remark}{Remark}
\newtheorem{assumption}{Assumption}
\begin{document}

\title{\huge Deploying Federated Learning in Large-Scale Cellular Networks: Spatial Convergence Analysis}

\author{Zhenyi Lin, Xiaoyang Li, Vincent K. N. Lau, Yi Gong, and Kaibin Huang
\thanks{Z. Lin and K. Huang are  affiliated with The University of Hong Kong, Hong Kong. Z. Lin is also with the Dept. of EEE at Southern University of Science and Technology (SUSTech), China. X. Li and Y. Gong are with the same institute. V.  K. N. Lau is with the The Hong Kong University of Science and Technology, Hong Kong. Corresponding authors: K. Huang (Email: huangkb@eee.hku.hk), Y. Gong (Email: gongy@sustech.edu.cn)}
}

\maketitle

\IEEEpeerreviewmaketitle
\begin{abstract}
The deployment of federated learning in a wireless network, called \emph{federated edge learning} (FEEL), exploits low-latency access to distributed mobile data to efficiently  train an AI model  while preserving data privacy. In this work, we study  the spatial (i.e., spatially averaged) learning performance of FEEL deployed in a large-scale cellular network with spatially  random distributed devices.  Both the schemes of digital and analog transmission are considered, providing support of error-free uploading  and over-the-air aggregation  of local model updates by devices. The derived spatial convergence rate for digital transmission is found to be constrained  by a limited  number of active devices regardless of device density and converges to the ground-true rate exponentially fast as the number grows. The population of active devices depends on network parameters such as processing gain and signal-to-interference threshold for decoding. On the other hand, the limit does not exist for uncoded analog transmission. In this case, the spatial convergence rate is slowed down due to the direct exposure of signals to the perturbation of inter-cell interference. Nevertheless, the effect diminishes when devices are dense as interference is  averaged out by aggressive over-the-air aggregation. In terms  of learning latency (in second), analog transmission is preferred to the digital scheme as the former dramatically reduces multi-access latency by enabling simultaneous access. 
\end{abstract}

\section{Introduction}
The availability of enormous data at edge devices motivate the deployment of machine-learning algorithms at the network edge to distill the data into \emph{artificial intelligence} (AI). The trained AI models are expected to enable a wide range of  next-generation mobile applications such as  autonomous driving and  augmented reality. Fast growing relevant research has led to the emergence of a new area called \emph{edge learning}  \cite{8736011,zhu2020toward}. In this area, federated learning is perhaps the most widely studied framework due to its feature of preserving data privacy by avoiding their uploading. To this end, a model-training task  is distributed over devices using the iterative algorithm of \emph{stochastic-gradient descent} (SGD) \cite{lim2020federated}.  A main vein of research on edge learning concerns efficient implementation of federated learning in wireless systems, call \emph{federated edge learning} (FEEL). In this work, we study the performance of FEEL in a large-scale cellular network where inter-cell interference is present. The results help crystalizing the effects of network parameters on the (model) convergence rate.  

In the area of FEEL, recent years have seen the development of diversified approaches for overcoming the communication bottleneck, which is caused by the uploading of high-dimensional model updates from multiple devices to a server. One approach is efficient joint management of communication-and-computation resources via designing scheduling and bandwidth allocation to accelerate convergence \cite{chen2019joint, shi2020joint, ren2020scheduling, yang2020delay}. 
From the theoretic perspective, researchers have attempted to shed light on the fundamental question of how many devices are needed for providing a guarantee on learning performance  within a finite time duration \cite{song2020optimal}. An alternative approach is to realize ``over-the-air aggregation" of  local model updates so as to support simultaneous access by many devices \cite{zhu2020over, amiri2020machine}. The core idea is to adopt analog transmission so as to exploit the waveform-superposition property of a multi-access channel. The versatility and efficiency of over-the-air aggregation has been improved by the development of numerous  relevant techniques including digital aggregation  \cite{zhu2020one},  gradient compression \cite{amiri2020machine}, power control \cite{zhang2020gradient}, and beamforming  \cite{yang2020federated}. Another approach is source compression. Some existing techniques exploit  local-model sparsity  \cite{lin2017deep} or enable efficient model quanization \cite{amiri2020federated}. 

In view of prior work, most results assume  single-cell systems. The topic of deploying  FEEL in a large-scale network remains one largely unexplored. In this scenario, learning performance is affected by inter-cell interference as well as network configurations. Recently, some initial work has accounted for such an effect in designing device-scheduling schemes  \cite{yang2019scheduling}. While the work points to the  important direction of FEEL networking, many fundamental questions remain unanswered. In particular, a question of our interest is how the convergence depends on the network parameters (i.e., device density, cell sizes, and coding rates),  which parameterize  the interference distribution. 

A standard  approach of characterizing the effect of   inter-cell interference on network performance, which is also adopted in this work, is to model the randomly located network nodes (devices or base stations) as spatial point processes such as  a \emph{Poisson point process} (PPP) or its derivatives \cite{haenggi2009stochastic}. Consequently, the interference power can be modelled as a shot-noise process, referring to a sum over a PPP \cite{kingman2005p}. Then the study of network performance  is reduced to the equivalent  analysis of the  expected performance of a \emph{typical cell}, which results from  uniformly  sampling all cells, over the distributions of interference, channels, and nodes   \cite{haenggi2009stochastic}. Such analysis  leverages   a rich set of results  from the stochastic-geometry theory \cite{chiu2013stochastic}. The tractability brought by the theory has motivated many researchers to use it as a tool to study the performance of a wide range of wireless networks such as cellular networks  (see e.g., \cite{andrews2011tractable}), cooperative networks (see e.g., \cite{hosseini2016stochastic}), heterogeneous networks (see e.g., \cite{dhillon2012modeling, soh2013energy}), and most recently unmanned aerial vehicle networks \cite{chetlur2017downlink}. Most existing work is based on the classic ``communication-and-computation separation" approach. To be specific, the considered networks aim at providing generic  radio-access services to users or sensors without concerning their  applications. The corresponding design objective is to  ensure the required quality-of-service, network throughput  or   coverage \cite{haenggi2009stochastic}. In contrast, the study of a FEEL network, referring to a network supporting the FEEL application, should adopt a learning-related  metric for network performance such as the proposed metric of convergence rate in a typical cell, termed  \emph{spatial convergence rate}. The corresponding network-performance analysis is differentiated from existing analysis in  its  interplay of stochastic geometry  and learning theories, which is a key feature of current analysis. 

In this work, we consider a large-scale network with hexagonal cells and devices distributed following a PPP. FEEL is deployed in a typical cell.  For the reason, the corresponding model convergence is termed \emph{spatial convergence}. Uplink transmission by each device is based on either digital or analog (over-the-air aggregation) transmission and protected against interference using \emph{frequency-hopping spread spectrum} (FHSS) following \cite{1542405}. By analyzing the spatial convergence  rate, we quantify the effects of network parameters on the learning performance for different transmission schemes and scenarios (i.e., low and high mobility). The key findings are summarized as follows.

\begin{itemize}
    \item\textbf{Spatial convergence for digital transmission:} The spatial convergence rate (in terms of rounds) \cite{bernstein2018signsgd} is derived to quantify the  deviation from the ground-true rate, which corresponds to direct gradient descent on the loss function. The deviation results from inter-cell interference and a random number of devices that succeed in transmission (i.e., a random data size), called \emph{successful devices}. The key findings are as follows. First, as the device density grows, the expected number of successful devices converges to a constant  and thereby introduces a limit to the learning performance. The expected number is proportional to the processing gain of spread spectrum, decreases with a growing \emph{signal-to-interference} (SIR) threshold for successful transmission, but is insensitive to variations of cell sizes.  Second, the mentioned rate deviation diminishes \emph{exponentially fast} as the expected number of successful devices increases.  Last, channel-temporal diversity due to high mobility increases the chance of  a device to succeed in transmission and participate in at least  one round of the learning process, which increases   the spatial-convergence rate. 
    
    \item \textbf{Spatial convergence for analog transmission:} The distinctions of analog transmission is its support of simultaneous access while directly exposing  the received  model update to the perturbation by interference. The corresponding spatial convergence rate is derived by applying results on the interference distribution from stochastic geometry to the convergence analysis. The rate deviation from the ground truth reveals two conflicting effects of increasing the devices density. On one hand, without outage, the expected number of active devices participating in learning can grow unboundedly as the density increases. Consequently, more training data lead to faster spatial convergence. On the other hand, increasing the device density also  causes the number of significant interferers to grow, which perturbs the SGD process and slows down spatial convergence.  As the first scaling law is faster than the second, the net effect is found to be a higher spatial convergence rate when devices are denser. This makes analog transmission a favourable choice over the digital counterpart in a dense network. 
    
    \item \textbf{Learning Latency:} Besides corroborating  the above findings, experiments using  a real dataset are conducted to compare the learning latency (in second) of digital and analog transmission. The latency  of  analog transmission is observed to be much lower than the  digital-transmission counterpart  in both sparse and dense networks. The low-latency of analog transmission  in a sparse network results from more active devices (i.e., fewer rounds) and that in a dense network  from shorter per-round latency. The findings are aligned with those for a single-cell system \cite{zhu2020over}.

\end{itemize}

The remainder of this paper is organized as follows. Models and metrics are introduced in Section II. Spatial convergence is analyzed \emph{with respect to} (w.r.t.) for the cases of digital-transmission and analog transmission in  Sections III and IV, respectively. Experimental results are presented in Section V, followed by concluding remarks in Section VI.

\section{Models and Metrics}

\subsection{Network Topology Model}

Adopting the  classic model, the cellular network contains hexagonal cells as illustrated  in Fig.~\ref{FigSys} \cite{Rappaport1996Wireless}. Base stations, denoted as  $\{Y\}\subset\mathds{R}^2$, are placed at cell centers. Let $C(Y, R)$ denote a cell centered at $Y$ and with a distance $R$ from $Y$ to its boundary. Randomly located edge devices, denoted as  $\{X\}\subset\mathds{R}^2$, are  randomly distributed on  plane modeled as a homogeneous  PPP $\Phi_\text{d} = \{X\}$ with density $\lambda_{\text{d}}$.  FEEL is performed in a \emph{typical cell} chosen by uniform sampling of all cells, denoted as $C_0 = C(Y_0, R)$ with $Y_0$ being the typical BS \cite{haenggi2009stochastic}.  Devices in other cells are interferers involved in  other services or tasks. To facilitate analysis, the number of devices in $\mathcal{C}_0$, namely $|\mathcal{C}_0\cap\Phi_\text{d}|$,  can be lower bounded by $K = |\mathcal{D}_0\cap\Phi_\text{d}|$, where  $\mathcal{D}_0$ represents  the inscribed disk of $\mathcal{C}_0$ with the radius $R$ (see Fig. \ref{FigSys}). For the $K$ devices, their propagation distances to $Y_0$ are \emph{independent and identically distributed} (i.i.d.) with  the following  \textit{probability density function} (PDF):
\begin{eqnarray}
    f_R(r) = \frac{2r}{R^2},~~~ 0<r<R.
\end{eqnarray}

\begin{remark}[Extension to Random Cells]\emph{It is possible to extend the current results to the case of random cells generated by  BSs distributed as a Poisson point process instead of a hexagonal lattice \cite{andrews2011tractable}. Similar to the current case, a random typical cell can be inner bounded by a disk but is radius, $R$,  is now random. Specifically, $R$ has the distribution function of $f_R(r) = 8\pi\lambda_{\text{s}} r \exp(-4\lambda_{\text{s}}\pi r^2)$ \cite{andrews2011tractable}.  The current analytical results hold conditioned on a given $R$. Then taking their expectation with respect to the distribution of $R$ yields the desired extension. }
\end{remark}

\begin{figure*}[t]
\centering
\includegraphics[width=9cm]{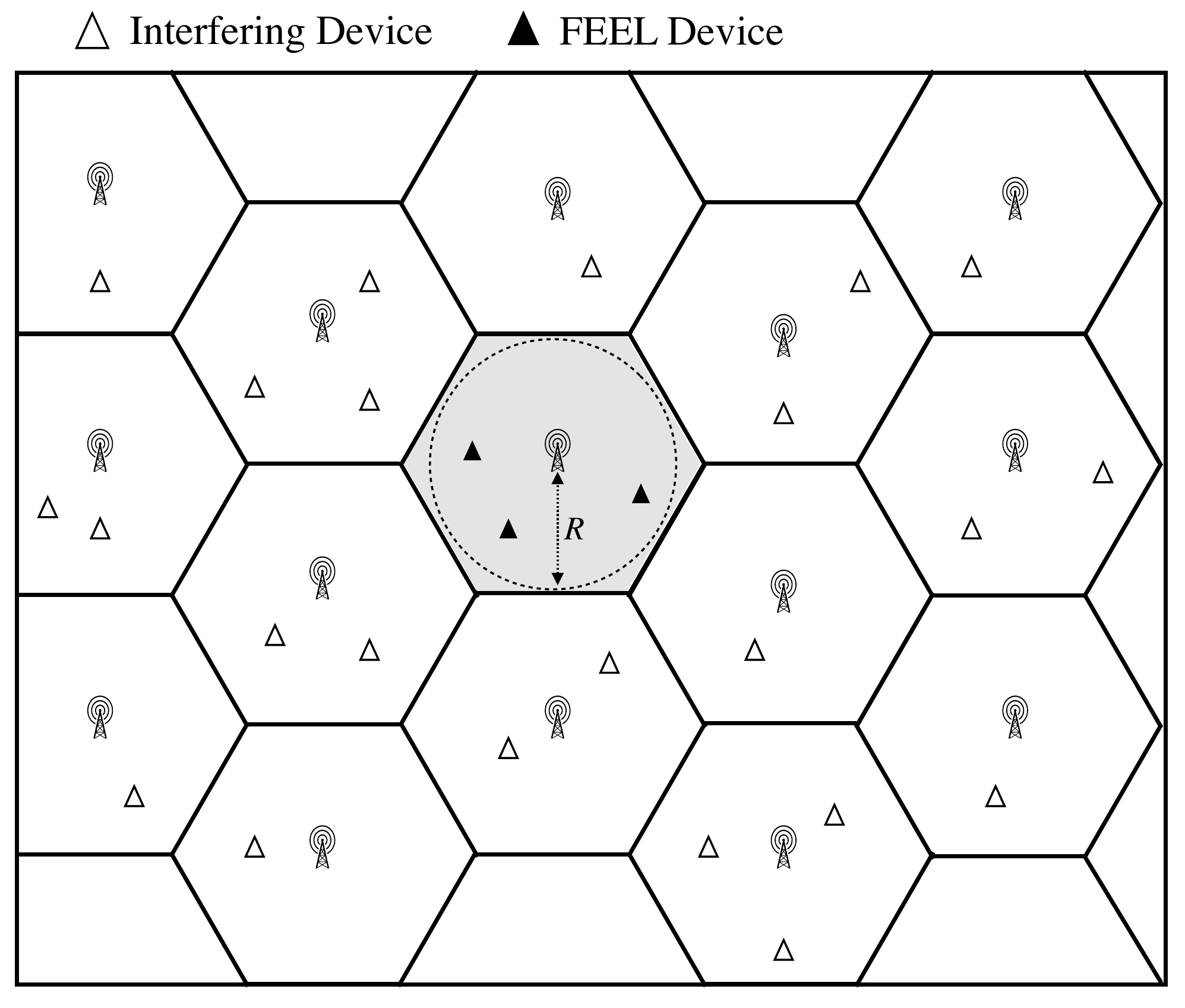}
\caption{The spatial model of a cellular network where  FEEL is supported in a typical cell.}
\label{FigSys}
\end{figure*}

\subsection{Federated Learning Model}

The operations of FEEL is illustrated in Fig.~\ref{FigLearn} and described as follows. We consider the specific implementation of FEEL where stochastic gradients are computed at devices using local data and then transmitted to the server (co-located with the BS) for updating the global model \cite{lim2020federated} (see Remark~\ref{Re:LocalModelExt} for extension to alternative implementation). Each round of FEEL comprises three phases: (1) global model updating and broadcasting, (2) local gradient computation, and (3) local gradient uploading. The current analysis focuses on  the last phase as it represents the communication bottleneck of the FEEL system as discussed earlier.  Let $t^{(n)}_{\text{cmm}}$ denote the duration of the uploading phase in the typical cell  in  the $n$-th round. The requirement that all participating devices must finish their uploading within the duration before the global model can be updated introduces the constraint of so called synchronized updates \cite{zhu2019broadband}.  Under the constraint, $t^{(n)}_{\text{cmm}}$ is a random variable depending on the random number of workers in the cell and their channel states. In contrast, the broadcasting phase uses the whole spectrum and can be assumed to finish within a given duration denoted as $t_{\text{bc}}$. Moreover, the workers are assumed to have comparable computation capacities, enabling them to complete local computation within a given duration denoted as $t_{\text{cmp}}$.

\begin{figure*}[t]
\centering
\includegraphics[scale=1]{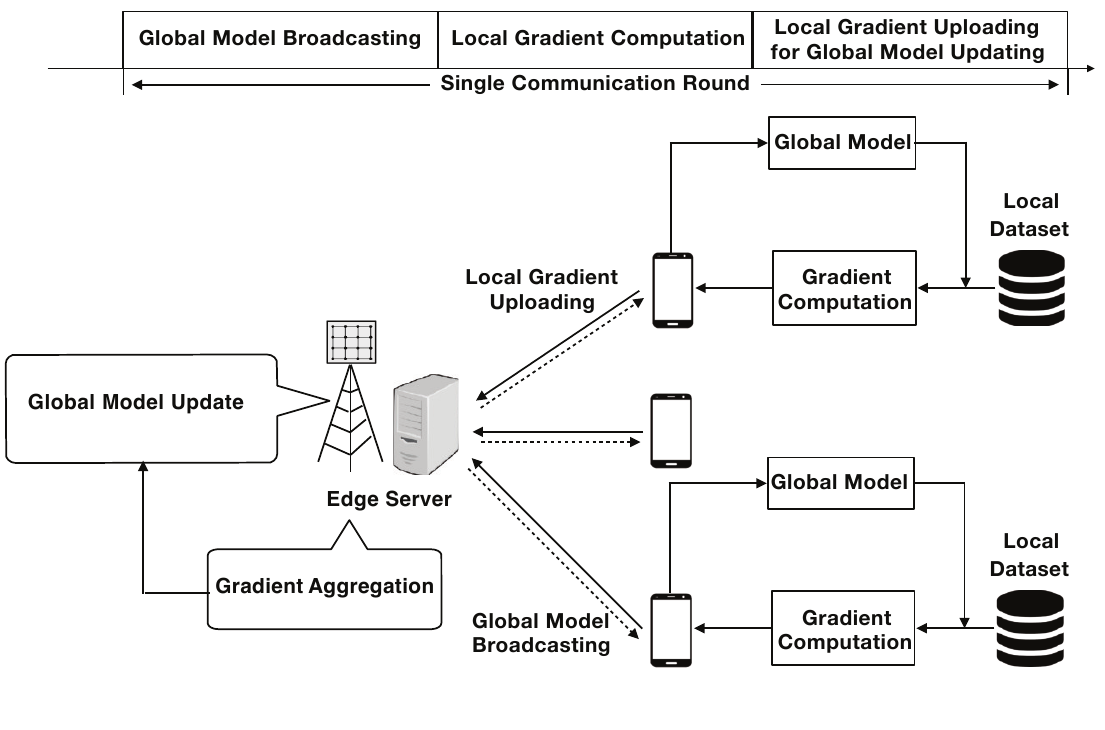}
\caption{The operations of FEEL in the  typical cell.}
\label{FigLearn}
\end{figure*}

Let $N$ denote the number of rounds needed for accomplishing the learning task, and $A^{(n)}$ the number of devices that successfully upload their gradients in the $n$-th round. Let $I_{X}$ be an indicator function of worker ${X}$ with $I_{X} = 1$ if transmission is successful  or otherwise $I_{X} = 0$. Denote the device process  in the $n$-th round  as $\Phi_\text{d}^{(n)}$.  Then we can write $A^{(n)} = \sum\nolimits_{{X}\in \mathcal{C}_0\cap\Phi_\text{d}^{(n)}} I_{X}$. Both the cases of  high and low mobility are considered. In the case of high mobility,  $\{\Phi_\text{d}^{(n)}\}$ are  independent over different rounds and so are $\{A^{(n)}\}$. In the case of low mobility, they are fixed throughout the learning process: $\Phi_\text{d}^{(1)} = \Phi_\text{d}^{(2)} = \cdots = \Phi_\text{d}^{(N)}$ and thus $A^{(1)} = A^{(2)} = \cdots = A^{(N)}$.

In the current setting of supervised learning, let a labelled data sample be denoted as $(\bm{u},y)$ with $\bm{u}$ and $y$ representing the data and label, respectively. The samples follow an \emph{unknown} probability distribution $p(\bm{u},y)$. Let $\bm{w}$ denote the model or its parameters.  Consider the loss function $f(\bm{w}; \mathbf{u}, y)$, which measures the discrepancy between predicted output from $\bm{w}$ using  the sample $(\mathbf{u}, y)$. The expected risk of the predictor $\bm{w}$, known as the \emph{ground-true loss function}, is defined as \cite{konevcny2017stochastic}:
\begin{equation}
    F(\bm{w}) = \mathsf{E}_{(\mathbf{u},y)\sim p(\mathbf{u},y)}[ f(\bm{w}; \mathbf{u},y)].
    \label{global loss f}
\end{equation}
Since the data distribution $p(\mathbf{u},y)$ is unknown, it is impossible to find the ideal model $\bm{w}^*=\arg\min_{\bm{w}} F(\bm{w})$. FEEL is a distributed  training algorithm for finding an approximate of the ideal model, which is described as follows.  

To this end, some notation is introduced. The local dataset of device ${X}$ is denoted as $\mathcal{D}_{X}$ comprising samples that are drawn i.i.d. from $p(\mathbf{u},y)$. Then the \emph{local loss function} is defined in terms of the \emph{empirical risk} as \cite{zhu2020one}:
\begin{equation}
    F_{X}^{(n)}(\bm{w}^{(n)}) = \frac{1}{|\mathcal{D}_{X}|} \sum_{(\mathbf{u}, y) \in \mathcal{D}_{X}} f(\bm{w}^{(n)}; \mathbf{u}, y). 
\label{local_loss}
\end{equation}
For convenience, we assume a uniform size for local datasets, i.e., $|\mathcal{D}_{X}|\equiv D,~\forall X$. The learning task of the typical cell is specified by the tuple $\{F_0, F^*, f\}$, where $f$ is the mentioned  per-sample loss function, $F_0 \triangleq F(\bm{w}^{(0)})$ denotes the value of the ground-true loss function $F$ at  the initial model $\bm{w}^{(0)}$, and $F^*$ is the global minimum of $F$.

The distributed SGD algorithm underpinning FEEL is described as follows  (see e.g., \cite{mcmahan2017communication}). Consider the  $n$-th  round, each device uses its local dataset $\mathcal{D}_{X}$ and the model broadcast by the BS,  $\bm{w}^{(n)}$,  to compute  the gradient of the local loss function $F_{X}^{(n)}(\bm{w}^{(n)})$, called a local gradient and denoted as $\tilde{\bm{g}}_{X}^{(n)}=\nabla F_{X}^{(n)}(\bm{w}^{(n)})$.
The local gradients are transmitted  to the BS for averaging, yielding  the following \emph{global gradient estimate (of that of the ground-true loss function)}:
\begin{equation}
    \bar{\bm{g}}^{(n)}_0=\left\{
        \begin{aligned}
            & \frac{1}{A^{(n)}} \sum\limits_{{X}\in \mathcal{C}_0\cap\Phi_\text{d}^{(n)}} \tilde{\bm{g}}_{X}^{(n)}, & A^{(n)}\geqslant 1, \\
            & 0, & A^{(n)} = 0.
        \end{aligned}
    \right.
\end{equation}
It is applied to updating the global model based on  gradient descent: 
\begin{equation}
\bm{w}^{(n+1)} = \bm{w}^{(n)} - \mu\bar{\bm{g}}_0^{(n)}, 
\label{Eq:global_update}
\end{equation}
where the step size $\mu$ is called  the \emph{learning rate}.  Last, the BS broadcasts the updated model to all devices, completing one round. The rounds are repeated till the model converges. 

\begin{remark}[Extension to Local-model Uploading]\label{Re:LocalModelExt} \emph{The current analysis can be extended to the alternative FEEL implementation with local-model uploading by accounting for multi-round local-gradient descent \cite{lim2020federated}. First, in each round, the local model at device $X$ is updated via $\tilde{\bm{w}}_{X}^{(n+1)} = \bm{w}_{X}^{(n)} -\mu \tilde{\bm{g}}_{X}^{(n)}$; then  $\tilde{\bm{w}_{X}}^{(n+1)}$ is transmitted   to the server for updating the global model:  $\bm{w}_{X}^{(n+1)} =  \frac{1}{A^{(n)}} \sum\nolimits_{{X}\in \mathcal{C}_0\cap\Phi_\text{d}^{(n)}} \tilde{\bm{w}}^{(n+1)}_{X}$. The analysis can be modified accordingly and the modification is   straightforward and  does not change the  findings. 
}
\end{remark}

For tractable convergence analysis, a set of assumptions commonly made in the literature (see e.g.,  \cite{bernstein2018signsgd}) are also adopted in this work. 
\begin{assumption} \emph{ (Lower Bound) The ground-true loss function $F(\bm{w})$ is lower bounded, namely  $F(\bm{w})\geq F^*$ for some constant $F^*$.}
\end{assumption}

\begin{assumption}\emph{(Smoothness) Let $S$ denote the model dimension and hence we can write the parameter vector as $\bm{w} = [w_1, w_2, ..., w_S]^T$. The ground-true loss function $F(\bm{w})$ is assumed smooth. Mathematically, for the loss function evaluated at $\bm{w}$, we assume there exist a non-negative constant vector $\bm{L} := [L_1, L_2, ..., L_S]^T$,  the gradient of the ground-true loss function $F(\bm{w})$, $\nabla F(\bm{w})$, satisfies the following 
\begin{equation}
\left|F(\bm{\beta}) - [F(\bm{w})+\nabla F(\bm{w})^T(\bm{\beta}-\bm{w})])\right| \leq \frac{1}{2}\sum_{i=1}^S L_i(\beta_i-w_i)^2,~ \forall \bm{w},\bm{\beta}.
\label{Eq:ass2}
\end{equation}
Define the   $\mathcal{L}_2$ Lipschitz constant $L_0$ as $L_0 : = \|L\|_{\infty}= \max_i L_i$.}
\end{assumption}
  
\begin{assumption}\emph{(Variance Bound) The stochastic gradient (or local gradient estimate)  $\tilde{\bm{g}}_{X}$ at an arbitrary device, say $X$, is an unbiased estimation of ground-true loss function and has a bounded variance: 
\begin{equation}
\mathsf{E}[\tilde{\bm{g}}_{X}]=\nabla F \quad \text { and } \quad \mathsf{E}\left[||\tilde{\bm{g}}_{X}-\nabla F||^2 \right] \leq \sigma^2,
\end{equation}
where $\sigma^2$ is a given constant. }
\end{assumption}

\subsection{Uplink Transmission Models}
The  cellular  network is assumed to be interference-limited, where channel noise is  negligible. All  BSs and devices are equipped with single antennas. The transmission power of a device in the typical cell depends on  the specific transmission schemes as elaborated in the sequel. For simplicity, all interfering devices are assumed to fix their power as $P$. FHSS is adopted to regulate inter-cell interference \cite{goldsmith2005wireless}. Specifically, the total uplink bandwidth $B$ is divided into $M$ sub-carriers; each device randomly chooses one sub-carrier for transmission in  each round and its choices over rounds are independent. As a result, the devices occupying an arbitrary  sub-carrier, say sub-carrier $m$,  is obtained from  $\Phi_\text{d}$ by thinning and thus also a PPP but with density $\lambda_{\text{d}}/M$, denoted as $\hat{\Phi}_{\text{d}, m}$. The transmission by an arbitrary device, $X$, is received at the typical BS with the power $G_X|X - Y_0|^{-\alpha}$ where the coefficient $G_X = \exp(1)$ models Rayleigh fading and $\alpha$ denotes the path-loss exponent. All fading coefficients are assumed independent. We consider two transmission schemes for devices in the typical cell. They are described as follows. 

\subsubsection{Digital Transmission}
For digital transmission, each coefficient of the local gradient at each device is quantized into a sufficiently large number of bits, denoted as $D$, such that the effect of quantization errors on learning performance is negligible. Then the quantized gradient is encoded and transmitted at the  fixed rate  $\frac{B}{M}\log(1+\theta)$ with $\theta$ being a chosen constant.  The fixed rate yields constant communication latency per round given as
\begin{equation}\label{Eq:tau}
    t_{\text{cmm}} = \frac{S DM}{B\log(1+\theta)}.
\end{equation}
To cope with both intra-cell and inter-cell interference, all devices make independent choices of their hopping patterns, each of which refers to  a sequence of choices of sub-carriers over rounds. Moreover, the transmission power of devices are assumed to be fixed and identical  to that of interferers. Considering the $n$-th round, the  receive SIR for transmission by an arbitrary device in the typical cell, denoted as $X_0$, over a chosen subcarrier, say $m$,  can be written as: 
\begin{eqnarray}\label{Eq: sir}
    \mathrm{SIR}^{(n)}_{X_0} = \frac{G_{X_0}|X_0|^{-\alpha}}{\sum_{X\in \hat{\Phi}_{\text{d}, m}^{(n)}\backslash\{X_0\}} G_{X}|X|^{-\alpha}}.
\end{eqnarray}
If the SIR exceeds the threshold $\theta$, the uploaded gradient can be  decoded correctly or otherwise an outage event occurs, resulting in the device being successful or inactive, respectively  \cite{weber2007effect}. Let $\mathcal{A}^{(n)}$ denote the set of active (or successful) devices in the typical cell in the $n$-th round: 
\begin{eqnarray}
    \mathcal{A}^{(n)} = \{X\in \Phi_\text{d}^{(n)}\cap\mathcal{C}_0|\mathrm{SIR}^{(n)}_{X}\geqslant \theta\}.
\end{eqnarray}
Then  $A^{(n)} = |\mathcal{A}^{(n)}|$. 

\subsubsection{Analog Transmission} In each round,  all devices in the typical cell transmit their  local gradients using linear analog modulation without coding and over the same sub-carrier to perform over-the-air aggregation \cite{zhu2019broadband}. In other words, their hopping patterns are identical but independent of  those of interferers. Following the model in \cite{zhu2019broadband}, assuming i.i.d. data over devices, the distribution  of the local-coefficients at each device is  assumed to have the  mean and variance, denoted as $\nu$ and $\tilde{\sigma}^2$, respectively, which are identical for all devices and known by them. To facilitate power control, a local-gradient vector at each device, say $X$, is normalized before transmission to have  zero mean and unit variance, i.e., $\mathbf{s}^{(n)}_X = \frac{\tilde{\bm{g}}_{X}^{(n)}-\nu}{\tilde{\sigma}}$. Then the normalized vector is analog modulated and transmitted as $\sqrt{P_X}\mathbf{s}^{(n)}_X$, where $P_X$ denotes  the transmission power. Next, for the typical BS to receive a desired average of uploaded  local gradients, their corresponding received signals must have aligned in  magnitude, called \emph{magnitude alignment} \cite{zhu2019broadband}. To this end, power control based on truncated channel inversion is applied to suppress channel fading \cite{cao2020optimized}:
\begin{eqnarray}\label{Eq:power_control}
    P_X = \left\{
        \begin{aligned}
            & \frac{\eta}{G_{X}|X|^{-\alpha}}, & G_{X}\geqslant g_\text{th}, \\
            & 0, & \text{otherwise},
        \end{aligned}
        \right.
\end{eqnarray}
where $\eta$ is the magnitude scaling factor of the received signal and  $g_\text{th}$ is a channel truncation threshold chosen to avoid exceeding an average power budget, denoted as  $\bar{P} = \mathsf{E}[P_X]$. For fair performance comparison with digital transmission,  we set the average transmission power to be $\bar{P} = P$ or equivalently $\mathsf{E}[P_X] = P$.  

The two constants $\eta$ and $g_\text{th}$ are set such that the constraint of average transmission power can be satisfied \cite{zhu2020one, cao2020optimized}.   The reason for not factoring path loss into channel  truncation is similar to that for proportional fairness with the ``fairness" measure modified as  data diversity in the current context. In other words, preventing devices with high path loss from transmission would fail to exploit data distributed at the cell edge for learning, and thus scarifies data diversity;  small-scale fading based truncation  in \eqref{Eq:power_control} avoids such an issue. By reuse of notation,  let $\mathcal{A}^{(n)}$  re-denotes the set of devices  whose channels are not truncated in $n$-th round:
\begin{eqnarray}
    \mathcal{A}^{(n)} = \{X\in \Phi_\text{d}^{(n)}\cap\mathcal{C}_0|G_{X}\geqslant g_\text{th}\}.
\end{eqnarray}
Then $A^{(n)} = |\mathcal{A}^{(n)}|$ is the number of active devices in the $n$-th round.

Given analog transmission, the received aggregated signal vector at the typical BS is 
\begin{eqnarray}
    \mathbf{y_0} = \sum\limits_{X\in \mathcal{A}^{(n)}}\sqrt{P_X G_{X}} |X|^{-\frac{\alpha}{2}} \mathbf{s}_{X}^{(n)} + \mathbf{I}_0,
\end{eqnarray}
where $\mathbf{I}_0$ is the  interference given as 
\begin{equation}
    \mathbf{I}_0 = \sum\limits_{X'\in \hat{\Phi}^{(n)}_{d, m}\cap\bar{\mathcal{C}}_{0}} \sqrt{P G_{X'}} |X'|^{-\frac{\alpha}{2}} \mathbf{s}_{X'}^{(n)}.
\end{equation}
At the typical BS, the desired estimation of the aggregated gradient is obtained by the following de-normalization of the received signal \cite{zhu2020one, cao2020optimized}:
\begin{eqnarray}\label{Eq:ana_recv_signal}
    \bar{\bm{g}}_0^{(n)} &=& \frac{\tilde{\sigma}}{A^{(n)}\sqrt{\eta}} \mathbf{y_0} +\nu \nonumber \\
    & = & \frac{1}{A^{(n)}} \sum\limits_{X\in \mathcal{A}^{(n)}} \tilde{\bm{g}}_{X}^{(n)} + \frac{\mathbf{I}_0 \tilde{\sigma}}{A^{(n)} \sqrt{\eta}}.
\end{eqnarray}
Since the  symbol duration is $T_\text{s} = \frac{M}{B}$,  the per-round latency for the analog transmission is 
\begin{eqnarray}\label{Eq:tau_ana}
    t_{\text{cmm}} = \frac{SM}{B}.
\end{eqnarray} 

\subsection{Learning Performance Metrics}
Two metrics for measuring the  performance of FEEL in a spatial network are defined as follows.  The first is the \emph{spatial convergence criterion}.  Consider FEEL in a specific cell centered at a fixed location $y\in \mathds{R}^2$.  Given $N$ rounds, let $\mathcal{J}(N)$ denotes the index set of rounds with a non-empty cell  and the number of \emph{effective rounds} $N_{\text{e}} = |\mathcal{J}(N)| $.  A   convergence criterion  widely adopted in the FEEL literature (see e.g., \cite{bernstein2018signsgd})  is determined by the expectation of averaged-gradient norm over rounds:
\begin{eqnarray}\label{eq:ap_conv}
\bar{g}_0(N) = \mathsf{E}\left.\left[\frac{1}{N_{\text{e}}} \sum_{n \in \mathcal{J}(N)}\left\|\nabla F(\bm{w}^{(n)})\right\|^{2}\right| N_{\text{e}}\geqslant 1\right] \leqslant \varepsilon_0,
\end{eqnarray}
where  $\varepsilon_0$ is a given constant.  Note that the expectation in (\ref{eq:ap_conv}) is taken over the distribution of descent trajectories. Since the typical cell $\mathcal{C}_0$ results from uniform sampling of all cells, there exists a probability that the learning in the cell fails to meet the convergence criterion in \eqref{eq:ap_conv}: $\mathrm{Pr}\left(\bar{g}_0(N) > \varepsilon_0 \right)$. The spatial convergence criterion is defined as one that the  network can support model training within $N$ rounds with a high probability, $(1-\delta)$. Mathematically,\begin{eqnarray}
\mathrm{Pr}\left(\bar{g}_0(N) > \varepsilon_0 \right) \leqslant \delta.
\label{Eq:spa_con_cri}
\end{eqnarray}
It is worth mentioning that if FEEL is performed in all cells, the probability in \eqref{Eq:spa_con_cri} can be interpreted as the percentage of cells where learning fails to be completed in time.

The next performance metric is \emph{expected learning latency} defined as the expected time duration (in second) required  for learning in the typical cell to meet the spatial convergence criterion in \eqref{Eq:spa_con_cri}. Let $N^\star$ denote the smallest number of rounds for meeting the criterion. The expected learning latency is the expected sum of computation-and-communication latency over $N$ rounds:
\begin{equation}\label{Eq:latency}
          \bar{T}_{\Sigma}   = \mathsf{E}\left[\sum_{n}^{N^\star} t_{\text{cmm}}^{(n)}\right] + N^\star (t_{\text{cmp}} + t_{\text{bc}}).
\end{equation}

\section{Spatial Convergence   for  the Digital-Transmission Case}
In this section, we consider the digital-transmission case and study the effects of network parameters on the spatial learning performance. To this end, we derive a sufficient condition for meeting the spatial convergence criterion and analyze the corresponding bound on the minimum expected learning latency. Both the cases of high and low mobility are considered. 

\subsection{Spatial Convergence Analysis with Low-Mobility} \label{sec:conv_dig_low}
Consider FEEL in the typical cell with low mobility. For tractability, the analysis in this section focuses on the case where only the subset of devices lying in the inscribed circle of the cell [see Fig. \ref{FigSys} (a)], $\mathcal{D}_0$, upload local gradients while other devices are silent. As it reduces  training data, the corresponding  convergence rate lower bounds  the counterpart involving all devices.

First, we derive the distribution of the number of active (successful) devices in $\mathcal{D}_0$. To this end, define the \emph{success probability}, denoted as $p_{\text{s}}$, as the probability that an arbitrary device  in $\mathcal{D}_0$ succeeds in transmission. Mathematically, 
\begin{eqnarray}
    p_{\text{s}} = \mathsf{E}_X[\Pr(\mathrm{SIR}^{(n)}_{X}>\theta | X\in\mathcal{D}_0)],
\end{eqnarray}
where $\mathrm{SIR}^{(n)}_{X}$ is given in \eqref{Eq: sir}.  Using the well-known Laplace-transform method (see e.g.,  \cite{andrews2016primer}), the probability can be obtained  as shown  in the following lemma.
\begin{lemma}[Success Probability \cite{andrews2016primer}] \label{lemma: ps_dig} \emph{
The success probability of a typical   device in the disc cell $\mathcal{D}_0$ is given as
    \begin{eqnarray}
        p_{\text{s}} = \frac{1-\exp(-aR^2)}{aR^2},
        \label{Eq:psd}
    \end{eqnarray}
    where
    \begin{eqnarray}\label{Eq:a}
        a = \frac{2\pi\lambda_{\text{d}}\mathcal{B}\left(\frac{2}{\alpha},1-\frac{2}{\alpha}\right)}{\alpha M}\theta^{\frac{2}{\alpha}}, 
    \end{eqnarray}
    with $\mathcal{B}(x,y)$ being the beta function: $\mathcal{B}(x, y)=\int_{0}^{1} t^{x-1}(1-t)^{y-1} d t$. }
\end{lemma}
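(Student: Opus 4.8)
The plan is to follow the standard Laplace-transform recipe of stochastic geometry for a success (coverage) probability in a PPP interference field, and then de-condition on the random link distance. First I would condition on the distance $r = |X_0|$ from the typical device to $Y_0$ and exploit the Rayleigh assumption $G_{X_0}\sim\exp(1)$. Writing the success event as $\{G_{X_0} > \theta r^{\alpha} I\}$, where $I = \sum_{X\in\hat{\Phi}_{\text{d},m}\setminus\{X_0\}} G_{X}|X|^{-\alpha}$ is the aggregate interference, the exponential tail of $G_{X_0}$ converts the conditional success probability into the Laplace transform of $I$ evaluated at $s = \theta r^{\alpha}$:
\begin{equation}
\Pr\!\left(\mathrm{SIR}^{(n)}_{X_0} > \theta \,\middle|\, r\right) = \mathsf{E}_I\!\left[\exp(-\theta r^{\alpha} I)\right] = \mathcal{L}_I(\theta r^{\alpha}).
\end{equation}

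Second, I would compute $\mathcal{L}_I$. Since the interferers on sub-carrier $m$ form a PPP of density $\lambda_{\text{d}}/M$ (by independent thinning of $\Phi_{\text{d}}$) and, by Slivnyak's theorem, removing the typical device leaves the law of the field unchanged, I would apply the probability generating functional of the PPP together with the identity $\mathsf{E}_G[\exp(-sG|x|^{-\alpha})] = (1+s|x|^{-\alpha})^{-1}$ for $G\sim\exp(1)$, integrating over the plane in polar coordinates. This gives
\begin{equation}
\mathcal{L}_I(s) = \exp\!\left(-\frac{2\pi\lambda_{\text{d}}}{M}\int_0^{\infty} \frac{v}{1+v^{\alpha}/s}\,dv\right).
\end{equation}
The main computational obstacle is the radial integral: with the substitution $t = v^{\alpha}/s$ it reduces to $\frac{s^{2/\alpha}}{\alpha}\int_0^{\infty}\frac{t^{2/\alpha-1}}{1+t}\,dt$, and the remaining integral is recognized as $\mathcal{B}\!\left(\frac{2}{\alpha},1-\frac{2}{\alpha}\right)$, which converges precisely when $\alpha>2$. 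Substituting $s = \theta r^{\alpha}$, so that $s^{2/\alpha}=\theta^{2/\alpha}r^2$, collapses the exponent to $-ar^2$ with $a$ exactly as in \eqref{Eq:a}, yielding $\Pr(\mathrm{SIR}^{(n)}_{X_0} > \theta \mid r) = \exp(-ar^2)$.

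Finally, I would de-condition on the link distance using the PDF $f_R(r) = 2r/R^2$ on $(0,R)$:
\begin{equation}
p_{\text{s}} = \int_0^{R} \exp(-ar^2)\,\frac{2r}{R^2}\,dr,
\end{equation}
and the substitution $u = r^2$ turns this into an elementary exponential integral, producing $p_{\text{s}} = (1-\exp(-aR^2))/(aR^2)$ as claimed. I expect the only genuinely delicate point to be the interference-integral step: beyond identifying the beta function, one must justify the interchange of expectation and the PPP sum (via the generating functional, i.e.\ Campbell's theorem) and record the convergence condition $\alpha>2$ that makes the interference exponent finite. The conditioning/de-conditioning bookkeeping and the closing average over $f_R$ are routine.
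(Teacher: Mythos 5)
Your derivation is correct and is precisely the well-known Laplace-transform method that the paper itself invokes (citing \cite{andrews2016primer}) without writing out the steps: conditioning on the link distance, using the Rayleigh tail to reduce the success probability to $\mathcal{L}_I(\theta r^\alpha)$, evaluating the probability generating functional of the thinned PPP of density $\lambda_{\text{d}}/M$ to identify the exponent $-ar^2$ via the beta-function integral (valid for $\alpha>2$), and de-conditioning with $f_R(r)=2r/R^2$. All steps check out, so your proof matches the paper's (cited) approach essentially verbatim.
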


Let $M_X$ denote an indicator whether device  $X$ is successful or not, i.e., $M_X = \mathrm{I}(\mathrm{SIR}_X\geqslant\theta)$. Thereby, the success  devices form a marked PPP represented by  $\tilde{\Phi}_{\text{d}} = \{X, M_X\}$. By applying the theorem of marked PPP, the density  of $\tilde{\Phi}_{\text{d}}$ is obtained as $\lambda_{\text{d}} p_{\text{s}}$ \cite{kingman2005p}.  Let $K$ denote the number of successful devices in $\mathcal{D}_0$ . 
\begin{lemma}[Distribution of the Number of Successful Devices]\label{lemma:dis_dig_device} \emph{
The distribution  of the number of successful devices, $K$, is given as 
    \begin{eqnarray}\label{Eq: Dig_devices}
        \Pr(K=j)=\frac{\exp \left(-\bar{K}\right)\left(\bar{K}\right)^{j}}{j !},
    \end{eqnarray}
with the mean
    \begin{eqnarray}
        \bar{K} = \pi\lambda_{\text{d}} R^2 p_{\text{s}} =\frac{\alpha M (1-e^{-aR^2})}{2\mathcal{B}\left(\frac{2}{\alpha},1-\frac{2}{\alpha}\right)\theta^{\frac{2}{\alpha}}},
        \label{eq:zeta}
    \end{eqnarray}
and $a$ is given in \eqref{Eq:a}. }
\end{lemma}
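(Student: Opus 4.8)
The plan is to realize the successful-device count $K$ as the number of points of an independently thinned Poisson point process restricted to the disc $\mathcal{D}_0$, so that the Poisson law follows from the fundamental counting property of a PPP. First I would revisit the derivation behind Lemma~\ref{lemma: ps_dig}: the Laplace-transform method yields not merely the spatially averaged success probability but the \emph{distance-conditioned} success probability $p_{\text{s}}(r) = \Pr(\mathrm{SIR}_{X}^{(n)} \ge \theta \mid |X| = r) = \exp(-a r^2)$, with $a$ as in \eqref{Eq:a}. Averaging this against the distance PDF $f_R(r) = 2r/R^2$ recovers $p_{\text{s}}$ in \eqref{Eq:psd}, confirming that $p_{\text{s}}$ is exactly the spatial average of $p_{\text{s}}(r)$ over a uniformly located device in $\mathcal{D}_0$.

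Next I would attach to every device $X \in \Phi_\text{d}$ the success indicator $M_X = \mathrm{I}(\mathrm{SIR}_X \ge \theta)$ as a mark and invoke the marking (independent thinning) theorem for a PPP: retaining each point with its location-dependent probability $p_{\text{s}}(|X|)$ produces an inhomogeneous PPP $\tilde{\Phi}_\text{d}$ whose intensity at $x$ is $\lambda_{\text{d}} p_{\text{s}}(|x|)$. By the defining property of a PPP, the count $K = |\tilde{\Phi}_\text{d} \cap \mathcal{D}_0|$ is then Poisson-distributed with mean equal to the total intensity mass over the disc,
\begin{equation}
\bar{K} = \int_{\mathcal{D}_0} \lambda_{\text{d}}\, p_{\text{s}}(|x|)\, dx = 2\pi\lambda_{\text{d}} \int_0^R e^{-a r^2}\, r\, dr = \pi\lambda_{\text{d}} R^2 p_{\text{s}},
\end{equation}
which establishes the Poisson form \eqref{Eq: Dig_devices}. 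Substituting $p_{\text{s}}$ from \eqref{Eq:psd} together with the expression \eqref{Eq:a} for $a$ then collapses the common $\pi\lambda_{\text{d}}$ factors and yields the closed-form mean in \eqref{eq:zeta}; this last step is routine algebra.

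The step I expect to be the genuine obstacle, and indeed the only non-mechanical one, is the justification of the independent thinning. The success indicators $\{M_X\}$ are not strictly independent across devices, since devices sharing a sub-carrier see a common interference field and mutually interfere, so $\tilde{\Phi}_\text{d}$ is not exactly Poisson. I would resolve this exactly as the surrounding text does: adopt the standard independent-marking approximation of stochastic geometry, under which each device is retained independently with probability $p_{\text{s}}(|X|)$ computed from the typical-device Laplace transform. Under this approximation the marking theorem applies verbatim and the thinned intensity $\lambda_{\text{d}} p_{\text{s}}$ quoted before the lemma is exact in the mean; the only content requiring care is acknowledging that the full Poisson law, as opposed to the first moment $\bar{K}$, rests on this independence assumption rather than on an exact distributional identity.
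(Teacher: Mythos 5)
Your proposal is correct and follows essentially the same route as the paper, which likewise attaches the success indicator $M_X = \mathrm{I}(\mathrm{SIR}_X\geqslant\theta)$ as a mark, invokes the marked-PPP/thinning theorem to obtain a retained process of density $\lambda_{\text{d}}p_{\text{s}}$, reads off the Poisson law of the count in $\mathcal{D}_0$ with mean $\pi\lambda_{\text{d}}R^2p_{\text{s}}$, and substitutes Lemma~\ref{lemma: ps_dig} to get \eqref{eq:zeta}. Your refinement of thinning with the distance-conditioned probability $p_{\text{s}}(r)=e^{-ar^2}$ and integrating the inhomogeneous intensity over the disc reproduces the same mean, and your explicit caveat that the marks are correlated through the shared interference field---so the Poisson form rests on the standard independent-thinning approximation rather than an exact distributional identity---makes precise a point the paper leaves implicit.
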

\begin{remark}[Finite Active Devices]\label{remark:fin_k}\emph{
It should be emphasized that   as  the device density $\lambda_{\text{d}}$ grows, the expected number of successful devices, $\bar{K}$, does not diverge since $p_{\text{s}}$ decreases due to stronger interference according to Lemma \ref{lemma:dis_dig_device}. As a result, $\bar{K}$ converges to a constant: 
\begin{eqnarray}
    \bar{K}\rightarrow\frac{\alpha M  }{2\mathcal{B}\left(\frac{2}{\alpha},1-\frac{2}{\alpha}\right)\theta^{\frac{2}{\alpha}}}, \qquad \lambda_{\text{d}} \rightarrow \infty.
    \label{Eq:mean_device_den}
\end{eqnarray}
}
\end{remark}

In FEEL, increasing the number of successful devices has the  effect of  increasing the batch-size of training data. This   reduces  the variance of the global gradient estimate. Given Assumption 3, it is straightforward to quantify the reduction as shown in the following lemma. 

\begin{lemma} \label{lemma:Redu_var} \emph{ In the typical cell, the number of successful devices in the $n$-th round,  $K^{(n)}$, reduces the  variance of the global gradient estimate as follows: 
    \begin{equation}
        \mathsf{E} \left[\left\|\frac{1}{ K^{(n)}} \sum\limits_{X\in K^{(n)}} \tilde{\bm{g}}_{X}^{(n)}-\nabla F(\bm{w}^{(n)})\right\|^2  \right]\leqslant \frac{\sigma^2}{K^{(n)}}. 
        \label{Eq:redu_var_bound}
    \end{equation}
}
\end{lemma}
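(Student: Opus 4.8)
The plan is to treat this as a standard conditional variance-reduction bound for an average of independent, unbiased estimators, where conditioning on $K^{(n)}$ (and on the current iterate $\bm{w}^{(n)}$) freezes the device count to a fixed integer so that the left-hand side is genuinely a conditional mean-square error. First I would invoke the unbiasedness part of Assumption 3, $\mathsf{E}[\tilde{\bm{g}}_X^{(n)}]=\nabla F(\bm{w}^{(n)})$, to rewrite the quantity inside the norm as a sum of centered terms, namely $\frac{1}{K^{(n)}}\sum_{X\in\mathcal{A}^{(n)}}\big(\tilde{\bm{g}}_X^{(n)}-\nabla F(\bm{w}^{(n)})\big)$. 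This isolates the fluctuation of each local gradient about the ground-true gradient and recasts the claim as a bound on the variance of an empirical mean of $K^{(n)}$ summands.

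The second step is to expand the squared Euclidean norm of this average into diagonal and off-diagonal contributions. Writing $K$ for $K^{(n)}$ and suppressing the round index for brevity,
\begin{equation}
\mathsf{E}\!\left[\Big\|\tfrac{1}{K}\sum_{X}(\tilde{\bm{g}}_X-\nabla F)\Big\|^2\right]
=\frac{1}{K^2}\sum_{X}\mathsf{E}\big[\|\tilde{\bm{g}}_X-\nabla F\|^2\big]
+\frac{1}{K^2}\sum_{X\neq X'}\mathsf{E}\big[(\tilde{\bm{g}}_X-\nabla F)^{T}(\tilde{\bm{g}}_{X'}-\nabla F)\big].
\nonumber
\end{equation}
The crux of the argument is that the cross terms vanish. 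This follows because the local datasets $\mathcal{D}_X$ are drawn i.i.d. from $p(\mathbf{u},y)$ independently across devices, so conditioned on $\bm{w}^{(n)}$ the centered gradients $\{\tilde{\bm{g}}_X^{(n)}-\nabla F\}$ are mutually independent, each with zero mean by Assumption 3; hence $\mathsf{E}[(\tilde{\bm{g}}_X-\nabla F)^{T}(\tilde{\bm{g}}_{X'}-\nabla F)]=\mathsf{E}[\tilde{\bm{g}}_X-\nabla F]^{T}\mathsf{E}[\tilde{\bm{g}}_{X'}-\nabla F]=0$ for every $X\neq X'$.

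Finally I would bound each of the $K$ surviving diagonal terms by $\sigma^2$ using the variance bound in Assumption 3, yielding $\frac{1}{K^2}\cdot K\sigma^2=\sigma^2/K$, which is exactly \eqref{Eq:redu_var_bound}. I expect the only delicate point to be the justification that the cross-correlations are zero: one must be explicit that the expectation is taken conditionally on the current model $\bm{w}^{(n)}$ and on the realized count $K^{(n)}$, and that the event determining which devices are successful (through the SIR and fading conditions defining $\mathcal{A}^{(n)}$) is independent of the data randomness used to form the stochastic gradients, so conditioning on $\mathcal{A}^{(n)}$ does not bias the individual gradient fluctuations. Once this independence is established, the conclusion is the familiar $1/K$ shrinkage of the mean-square error and no further estimates are required.
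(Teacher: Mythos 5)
Your proof is correct and is essentially the argument the paper relies on: the paper states this lemma without a written proof, calling it ``straightforward'' given Assumption~3, and the intended justification is exactly your standard variance-of-the-mean computation (center by unbiasedness, kill the cross terms by independence of the local datasets, bound the $K^{(n)}$ diagonal terms by $\sigma^2$). Your added remark that the activation events (SIR or fading threshold) are driven by channel randomness independent of the data, so conditioning on $K^{(n)}$ does not bias the local gradients, is a point the paper leaves implicit and is worth making explicit.
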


In low-mobility case, the number of successful devices in $\mathcal{D}_0$ is fixed throughout the  learning process: $K^{(1)} = K^{(2)} = \cdots = K^{(N)}=K$. Note that $K$ is a random variable since the typical cell is a random process. With $K$ fixed for a particular  typical-cell realization, the model converge has been analyzed extensively in the literature. Specifically,  the following result on the convergence rate can be derived using the method in \cite{bernstein2018signsgd}. 

\begin{lemma}[Fixed-Cell Convergence with Digital Transmission and Low-mobility \cite{bernstein2018signsgd}] \label{lemma:Sing_cell_conv} \emph{ Consider the case with digital transmission and low mobility. With $K$ fixed, and given  the learning rate $\mu = \frac{1}{L_0\sqrt{N}}$, the expected  averaged-gradient norm is bounded as follows:
    \begin{eqnarray}
        \bar{g}_0(K,N) \leqslant \frac{1}{\sqrt{N}} \left[ (F_0-F^*)+\frac{\sigma^2}{K}\right], \qquad K > 0.
        \label{Eq:dig_conv_single}
    \end{eqnarray}
}
\end{lemma}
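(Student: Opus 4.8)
The plan is to reduce the statement to a standard bound on non-convex \emph{stochastic-gradient descent} (SGD) and then specialize the step size. First I would observe that in the low-mobility regime the typical cell contains $K>0$ successful devices in every round (since $K^{(1)}=\cdots=K^{(N)}=K$), so the cell is never empty, the set of effective rounds in \eqref{eq:ap_conv} is the full index set, and $N_{\text{e}}=N$. Consequently the criterion collapses to $\bar{g}_0(K,N)=\frac{1}{N}\sum_{n}\mathsf{E}\big[\|\nabla F(\bm{w}^{(n)})\|^2\big]$, where the expectation is taken only over the stochastic-gradient (descent-trajectory) randomness, the device process being frozen. This isolates exactly the quantity a textbook SGD analysis controls.

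Next I would invoke the smoothness Assumption 2. Applying \eqref{Eq:ass2} to the consecutive iterates $\bm{w}^{(n)}$ and $\bm{w}^{(n+1)}$ and substituting the update \eqref{Eq:global_update}, together with the coordinate-wise bound $\sum_i L_i(\cdot)^2\leq L_0\|\cdot\|^2$, gives the descent inequality $F(\bm{w}^{(n+1)})\leq F(\bm{w}^{(n)})-\mu\,\nabla F(\bm{w}^{(n)})^{T}\bar{\bm{g}}_0^{(n)}+\frac{L_0\mu^2}{2}\|\bar{\bm{g}}_0^{(n)}\|^2$. Conditioning on $\bm{w}^{(n)}$ and taking expectation over the current round's gradient noise, Assumption 3 makes $\bar{\bm{g}}_0^{(n)}$ an unbiased estimate of $\nabla F(\bm{w}^{(n)})$, so the cross term yields $-\mu\|\nabla F(\bm{w}^{(n)})\|^2$. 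I would then split the second moment as $\mathsf{E}\|\bar{\bm{g}}_0^{(n)}\|^2=\|\nabla F(\bm{w}^{(n)})\|^2+\mathsf{E}\|\bar{\bm{g}}_0^{(n)}-\nabla F(\bm{w}^{(n)})\|^2$ and bound the variance term by $\sigma^2/K$ using Lemma \ref{lemma:Redu_var}, which is precisely where the batch-averaging over $K$ devices enters.

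Collecting terms produces a per-round recursion of the form $\mathsf{E} F(\bm{w}^{(n+1)})\leq \mathsf{E} F(\bm{w}^{(n)})-\big(\mu-\tfrac{L_0\mu^2}{2}\big)\mathsf{E}\|\nabla F(\bm{w}^{(n)})\|^2+\frac{L_0\mu^2\sigma^2}{2K}$. Summing (telescoping) over $n=0,\dots,N-1$, dropping the terminal term via the lower bound $F(\bm{w}^{(N)})\geq F^*$ (Assumption 1), and dividing by $N\big(\mu-\tfrac{L_0\mu^2}{2}\big)$ bounds the averaged-gradient norm in terms of $F_0-F^*$ and $\sigma^2/K$. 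Finally, substituting $\mu=\frac{1}{L_0\sqrt{N}}$ gives $N\mu=\sqrt{N}/L_0$ and reduces the noise contribution to order $\frac{\sigma^2}{K\sqrt{N}}$, producing the advertised $\frac{1}{\sqrt{N}}$ scaling with the $(F_0-F^*)+\frac{\sigma^2}{K}$ structure.

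I expect the main obstacle to be the constant bookkeeping in the last step: the curvature term $\frac{L_0\mu^2}{2}\|\nabla F(\bm{w}^{(n)})\|^2$ must be absorbed (e.g.\ by using $\mu\leq 1/L_0$ so that $\mu-\frac{L_0\mu^2}{2}\geq \frac{\mu}{2}$, or by retaining only the leading order in $1/\sqrt{N}$), and the coordinate-wise constants $\{L_i\}$ must be reduced to the scalar $L_0$ so that the bound takes the clean stated form. A secondary point requiring care is to confirm that the expectation defining $\bar{g}_0(K,N)$ is genuinely over gradient noise alone — the point process, and hence $K$, being fixed for the given typical-cell realization — so that Lemma \ref{lemma:Redu_var} applies round-by-round and the telescoping sum is legitimate.
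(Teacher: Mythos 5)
Your proposal is correct and takes essentially the same route as the paper: the lemma is credited to \cite{bernstein2018signsgd}, and the identical machinery---the smoothness descent inequality from \eqref{Eq:ass2} with $L_0=\|L\|_\infty$, unbiasedness plus the $\sigma^2/K$ variance bound of Lemma~\ref{lemma:Redu_var}, telescoping with $F\geq F^*$, and the $\mu=\frac{1}{L_0\sqrt{N}}$ substitution---is exactly what the paper deploys in Appendix~B when proving Theorem~\ref{Theo:Conv_dig_high}, of which this lemma is the fixed-cell special case ($N_{\text{e}}=N$, $K^{(n)}\equiv K$). Your caveat about constant bookkeeping is well placed: the paper's own derivation likewise leaves a residual factor of $2L_0$ multiplying $(F_0-F^*)$ after bounding $\frac{1}{\mu-L_0\mu^2/2}\leqslant \frac{2L_0}{1/\sqrt{N}}$, which the stated bound silently absorbs.
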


Since  $K$ is a random variable, so is  the averaged gradient norm. To facilitate spatial convergence analysis, we  apply the Markov inequality to upper bound the norm as
\begin{eqnarray}\label{Eq:markov}
    \mathrm{Pr}\left(\bar{g}_0(K,N) > \varepsilon_0 \right) &=&     \mathrm{Pr}\left(\bar{g}_0(K,N) > \varepsilon_0 \mid K > 0 \right)(1 - p_{\text{null}}) +p_{\text{null}}  \nonumber\\
    &\leq &\frac{\mathsf{E}[\bar{g}_0(K,N)\mid K > 0]}{\varepsilon_0} (1 - p_{\text{null}}) +p_{\text{null}}, \nonumber 
\end{eqnarray}
where the void probability $p_{\text{null}} = \Pr(K = 0)$. It follows from  Lemma \ref{lemma:dis_dig_device} that $p_{\text{null}}= e^{-\bar{K}}$.  By setting the above upper bound equal to $\delta$, a sufficient condition for meeting the spatial convergence criteria in \eqref{Eq:spa_con_cri} is 
\begin{eqnarray}\label{Eq:Spa_conv_cri_low}
\mathsf{E}[\bar{g}_0(K,N)\mid K > 0] &\leqslant& \frac{(\delta-p_{\text{null}}) \varepsilon_0}{1-p_{\text{null}}}. 
\end{eqnarray}
Consider a  \emph{typical non-empty} cell from  uniformly sampling  the set of non-empty cells. Then the spatial convergence rate of FEEL as measured using the metric $\mathsf{E}[\bar{g}_0(K,N)\mid K > 0]$ can be obtained as shown in the following theorem. 

\begin{theorem}[Spatial Convergence with Digital Transmission and Low-mobility]\label{Theo:Conv_dig_low} \emph{In this case, given the learning rate $\mu = \frac{1}{L_0\sqrt{N}}$, the expected averaged-gradient norm of a  \emph{ typical non-empty} cell is bounded as follows:
    \begin{eqnarray}
        \mathsf{E}[\bar{g}_0(K,N)\mid K > 0] \leqslant \frac{1}{\sqrt{N}} \left[ (F_0-F^*)+\frac{\sigma^2 e^{-\bar{K}}}{1-e^{-\bar{K}}} \left(\mathrm{Ei}(\bar{K})-\log \bar{K} -\gamma\right)\right],
        \label{Eq:dig_conv}
    \end{eqnarray}
where $\bar{K}$ is the expected number of active devices  in \eqref{eq:zeta}, the exponential integral $\mathrm{Ei}(x) = \int_{-\infty}^x \frac{\exp(t)}{t}dt$ and $\gamma$ represents  the Euler's Constant ($\approx 0.5772...$).}
\end{theorem}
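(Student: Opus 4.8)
The plan is to derive the bound by taking the expectation of the fixed-cell convergence bound from Lemma~\ref{lemma:Sing_cell_conv} over the random number of successful devices $K$, whose law is the Poisson distribution established in Lemma~\ref{lemma:dis_dig_device}. Since the fixed-cell bound $\bar{g}_0(K,N) \leq \frac{1}{\sqrt{N}}[(F_0-F^*)+\sigma^2/K]$ holds pathwise for every realization with $K>0$, monotonicity of conditional expectation gives
\begin{equation}
\mathsf{E}[\bar{g}_0(K,N)\mid K>0] \leq \frac{1}{\sqrt{N}}\left[(F_0-F^*) + \sigma^2\,\mathsf{E}\!\left[\tfrac{1}{K}\,\Big|\,K>0\right]\right].
\end{equation}
The deterministic term $(F_0-F^*)$ passes through the expectation unchanged, so the entire problem reduces to evaluating the single quantity $\mathsf{E}[1/K \mid K>0]$ for a Poisson variable of mean $\bar{K}$.

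For the second step I would expand this conditional expectation explicitly using the Poisson mass function. Conditioning on $K>0$ rescales by the non-void probability $1-p_{\text{null}} = 1-e^{-\bar{K}}$ (from Lemma~\ref{lemma:dis_dig_device}), yielding
\begin{equation}
\mathsf{E}\!\left[\tfrac{1}{K}\,\Big|\,K>0\right] = \frac{1}{1-e^{-\bar{K}}}\sum_{j=1}^{\infty}\frac{1}{j}\cdot\frac{e^{-\bar{K}}\bar{K}^{j}}{j!} = \frac{e^{-\bar{K}}}{1-e^{-\bar{K}}}\sum_{j=1}^{\infty}\frac{\bar{K}^{j}}{j\cdot j!}.
\end{equation}
At this point the factor $\frac{e^{-\bar{K}}}{1-e^{-\bar{K}}}$ already matches the prefactor in the target expression, so the task collapses to identifying the series $\sum_{j\geq 1}\bar{K}^{j}/(j\cdot j!)$ in closed form.

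The crux of the argument, and the step I expect to be the main obstacle, is recognizing this series as a standard special-function identity rather than a term one can sum elementarily. The required fact is the power-series representation of the exponential integral, namely $\mathrm{Ei}(x) = \gamma + \log x + \sum_{j=1}^{\infty}\frac{x^{j}}{j\cdot j!}$ for $x>0$, which rearranges to $\sum_{j\geq 1}\bar{K}^{j}/(j\cdot j!) = \mathrm{Ei}(\bar{K}) - \log\bar{K} - \gamma$. Substituting this identity into the expression for $\mathsf{E}[1/K\mid K>0]$ and then back into the averaged bound produces exactly \eqref{Eq:dig_conv}, completing the proof. If one did not know the exponential-integral expansion a priori, one could instead recover it by writing $\frac{1}{j} = \int_0^1 t^{j-1}\,dt$, swapping sum and integral to obtain $\sum_{j\geq 1}\frac{\bar{K}^{j}}{j\cdot j!} = \int_0^1 \frac{e^{\bar{K}t}-1}{t}\,dt$, and matching the resulting integral to the definition of $\mathrm{Ei}$ via a change of variables; the interchange of summation and integration is justified by uniform convergence of the entire-function series on the compact interval. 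Everything apart from this special-function identification is routine.
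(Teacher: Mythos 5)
Your proposal is correct and follows essentially the same route as the paper's proof in Appendix~A: condition the fixed-cell bound of Lemma~\ref{lemma:Sing_cell_conv} on $K>0$, expand $\mathsf{E}[1/K \mid K>0]$ over the Poisson law of Lemma~\ref{lemma:dis_dig_device}, and close the series $\sum_{j\geq 1}\bar{K}^{j}/(j\cdot j!) = \mathrm{Ei}(\bar{K})-\log\bar{K}-\gamma$. The only cosmetic difference is that the paper invokes this identity by citing Chao and Strawderman's result on negative moments of positive random variables, whereas you derive it directly from the power-series expansion of $\mathrm{Ei}$ (with a self-contained integral justification), which is the same identity.
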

\begin{proof}
    See Appendix \ref{App:Spa_conv_dig_low}.
\end{proof}

At the right-hand side of \eqref{Eq:dig_conv}, the first term, namely $(F_0-F^*)$, represents gradient descent along a path defined by the ground-true gradients. On the other hand, the second term that is a function of $\bar{K}$ reflects the effect of inaccurate distributed gradient estimation. Its dependance on $\bar{K}$ is discussed as follows.   According to Remark \ref{remark:fin_k}, in a network with dense devices,  $\bar{K}$ is independent of the device density  but proportional to $M \theta^{-\frac{2}{\alpha}}$. Using the result,  it follows from Theorem~\ref{Theo:Conv_dig_low} that the deviation of convergence rate from   the ideal one  can be approximated as 
\begin{equation}
    \mathsf{E}[\bar{g}_0(K,N)\mid K > 0] - \frac{F_0-F^*}{\sqrt{N}} \approx  \frac{c_1\sigma^2}{\sqrt{N}} \exp\left(-\frac{c_2 M}{\theta^{\frac{2}{\alpha}}}\right), 
\end{equation}
where $c_1$ and $c_2$ are constants. One can observe that the loss in convergence rate due to distributed gradient estimation  decays at an exponential rate when either  the number of sub-channels, $M$, or the SIR-threshold function $\theta^{-\frac{2}{\alpha}}$ increases. The gain of the former arises from interference suppression using FHSS and that of the latter from the reduction of outage probability as $\theta$ reduces, both of which contribute to the  growth of the number of successful devices. 

It should be emphasized that the above gains of  convergence rate (in round) is at the cost of increased per-round latency (in second). The learning latency is discussed as follows. 

\begin{remark}[Learning Latency] \label{remark:dig_lat_low} \emph{For ease of notation, define the  constant $\varepsilon = \frac{(\delta-p_{\text{null}}) \varepsilon_0}{1-p_{\text{null}}}$. Based on the result in Theorem \ref{Theo:Conv_dig_low}, to meet the spatial-convergence criterion in \eqref{Eq:spa_con_cri}, the expectation of the required   number of round, denoted as $N_{\min}$,  is upper bounded as 
		\begin{equation}
        \mathsf{E}[N_{\min}]\leq \frac{1}{\varepsilon^2}\left[(F_0-F^*)+\frac{\sigma^2 e^{-\bar{K}}}{1-e^{-\bar{K}}} \left(\mathrm{Ei}(\bar{K})-\log \bar{K} -\gamma\right)\right]^2. 
        \label{Eq:dig_rounds_low}
\end{equation}
Then the expected learning latency defined in \eqref{Eq:latency} is given as 
\begin{eqnarray}
        \bar{T}_{\Sigma} =         \mathsf{E}[N_{\min}]\cdot\underbrace{\left(\frac{SDM}{B\log(1+\theta)}+t_{\text{cmp}}+t_{\text{bc}}\right)}_{\text{Per-round latency}}.
        \label{Eq:dig_latency_low}
    \end{eqnarray}
where  $t_{\text{cmp}}$ and $t_{\text{bc}}$ are recalled to be   constant latency for computation and broadcasting, respectively.   The dependence of learning latency  on  network parameters are  described as follows. 
\begin{itemize}
    \item  \emph{(SIR Threshold)} Increasing the SIR threshold $\theta$ is found to have two opposite effects.  On one hand, a larger  $\theta$ reduces the number of active  devices  and increases the null probability $p_{\text{null}}$. This causes  the increase of the required  rounds  for spatial convergence.   On the other hand, increasing $\theta$ leads to a higher data rate and hence lower  per-round latency. These effects give rise to the need optimizing $\theta$ for minimizing the learning latency as further illustrated by experimental results in the sequel.   
      \item \emph{(Device Density)} One can observe from  \eqref{Eq:dig_rounds_low} and \eqref{Eq:dig_latency_low} that  the device density $\lambda_d$ (or the expected number of active devices $\bar{K}$)   affects only  the expected number of rounds but not the per-round latency. As $\lambda_d$ (or $\bar{K}$) increasing, the expected number of rounds converges to the  minimum.  
    \item \emph{(Processing Gain)} Increasing the processing gain of FHSS, $M$, reduces the number of required rounds (via increasing the number of active devices) but \emph{linearly} increases the per-round latency. When there is a sufficiently large number of active devices (i.e., sufficient exploited data), it is desirable to rein in the second effect by keeping $M$ small. 
    \end{itemize}
}
\end{remark}
\subsection{Spatial Convergence Analysis with High Mobility}
In this sub-section, we show that high mobility increases  the spatial convergence rate as well as reduces the learning latency. In this case, the typical-cell realization changes independently over rounds. Consequently, an empty cell in one round can be non-empty in another.  In contrast, the realization is fixed throughout the learning process in the case of low mobility. Therefore, for FEEL to be feasible, the typical cell in the current case should uniformly sample those cells that are \emph{non-empty in at least one of $N$ round}, i.e., $N_e > 0$. For the consistency with digital case and tractability, it is also necessary to choose a suitable learning rate as $\mu=\frac{1}{L_{0}}\sqrt{\mathsf{E}\left.\left[\frac{1}{N_{\text{e}}}\right|N_{\text{e}} \geqslant 1 \right]}$. Then the spatial convergence rate is derived as follows. 

\begin{theorem}[Spatial Convergence with Digital Transmission and High-mobility]\label{Theo:Conv_dig_high} 
 \emph{In this case, given the learning rate $\mu=\frac{1}{L_{0}}\sqrt{\mathsf{E}\left.\left[\frac{1}{N_{\text{e}}}\right|N_{\text{e}} \geqslant 1 \right]}$ and small $p_{\text{null}}$, the expected averaged-gradient norm of the typical cell that is non-empty in at least one  round is bounded as follows:
    \begin{eqnarray}
        \mathsf{E}[\bar{g}_0(N)\mid N_{\text{e}} \geqslant 1] &\leqslant \sqrt{\frac{1}{N}+  \frac{p_{\text{null}}}{N-1}}  \left[(F_0-F^*)+\frac{\sigma^2 e^{-\bar{K}}}{1-e^{-\bar{K}}} \left(\mathrm{Ei}(\bar{K})-\log \bar{K} -\gamma\right)\right] +O(p_{\text{null}}^2),
        \label{Eq:conv_dig_high}
    \end{eqnarray} 
    where  $\bar{K}$ is defined in \eqref{eq:zeta} and $\mathrm{Ei}(\cdot)$ and $\gamma$ follow those in Theorem  \ref{Theo:Conv_dig_low}.
    }
\end{theorem}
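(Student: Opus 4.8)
The plan is to extend the fixed-cell SGD analysis underlying Lemma~\ref{lemma:Sing_cell_conv} and Theorem~\ref{Theo:Conv_dig_low} to the high-mobility setting, where the per-round active counts are now \emph{independent} across rounds. First I would record the probabilistic structure: by Lemma~\ref{lemma:dis_dig_device} and the high-mobility assumption, the counts $\{K^{(n)}\}_{n=1}^{N}$ are i.i.d.\ Poisson with mean $\bar{K}$, an update occurs in round $n$ only when $K^{(n)}>0$, and empty rounds leave the model unchanged. Hence whether round $n$ is effective is an i.i.d.\ Bernoulli trial with success probability $1-p_{\text{null}}$, so that $N_{\text{e}}=|\mathcal{J}(N)|$ is Binomial$(N,1-p_{\text{null}})$ with $p_{\text{null}}=e^{-\bar{K}}$. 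Conditioning on the entire count sequence (equivalently on $N_{\text{e}}$ and on the active counts $\{K_m\}_{m\in\mathcal{J}(N)}$ of the effective rounds), I would run exactly the descent argument of \cite{bernstein2018signsgd}: start from the smoothness inequality \eqref{Eq:ass2}, substitute the update \eqref{Eq:global_update}, take the per-round gradient-noise expectation using the unbiasedness of Assumption~3 and the variance bound $\sigma^2/K_m$ from Lemma~\ref{lemma:Redu_var}, and telescope over the effective rounds only. This yields, conditional on the counts, a bound of the form
\begin{equation*}
\bar{g}_0(N) \;\lesssim\; \frac{F_0-F^*}{\mu N_{\text{e}}} + \frac{L_0\mu\,\sigma^2}{2N_{\text{e}}}\sum_{m\in\mathcal{J}(N)}\frac{1}{K_m},
\end{equation*}
i.e.\ the Lemma~\ref{lemma:Sing_cell_conv} bound with the deterministic step count replaced by the random $N_{\text{e}}$.

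Second, I would take the expectation over the count randomness conditioned on $N_{\text{e}}\geqslant 1$. The crucial simplification is that the variance term decouples from $N_{\text{e}}$: given $N_{\text{e}}=k$, the active counts $K_1,\dots,K_k$ of the effective rounds are i.i.d.\ draws from the Poisson law conditioned on being positive, so
\begin{equation*}
\mathsf{E}\!\left[\frac{1}{N_{\text{e}}}\sum_{m\in\mathcal{J}(N)}\frac{1}{K_m}\,\Big|\,N_{\text{e}}=k\right]=\frac{1}{k}\cdot k\cdot\mathsf{E}\!\left[\frac{1}{K}\,\Big|\,K>0\right]=\mathsf{E}\!\left[\frac{1}{K}\,\Big|\,K>0\right],
\end{equation*}
which is independent of $k$ and therefore survives the outer conditioning unchanged. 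Invoking the same series identity used for Theorem~\ref{Theo:Conv_dig_low}, this equals $\frac{e^{-\bar{K}}}{1-e^{-\bar{K}}}\big(\mathrm{Ei}(\bar{K})-\log\bar{K}-\gamma\big)$. The drift term contributes $\frac{F_0-F^*}{\mu}\,\mathsf{E}[1/N_{\text{e}}\mid N_{\text{e}}\geqslant 1]$. Substituting the prescribed learning rate $\mu=\tfrac{1}{L_0}\sqrt{\mathsf{E}[1/N_{\text{e}}\mid N_{\text{e}}\geqslant 1]}$ makes both terms share the common prefactor $\sqrt{\mathsf{E}[1/N_{\text{e}}\mid N_{\text{e}}\geqslant 1]}$, reproducing the bracketed factor of \eqref{Eq:dig_conv} (with $L_0$ and the factor $\tfrac12$ absorbed exactly as in Lemma~\ref{lemma:Sing_cell_conv}); note that taking $N_{\text{e}}=N$ deterministically recovers Theorem~\ref{Theo:Conv_dig_low}, a useful consistency check.

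Third, I would evaluate $\mathsf{E}[1/N_{\text{e}}\mid N_{\text{e}}\geqslant 1]$ asymptotically for small $p_{\text{null}}$. Writing it as $\big(1-p_{\text{null}}^{N}\big)^{-1}\sum_{k=1}^{N}\tfrac{1}{k}\binom{N}{k}(1-p_{\text{null}})^{k}p_{\text{null}}^{N-k}$ and expanding in powers of $p_{\text{null}}$, the $k=N$ term gives $\tfrac1N(1-p_{\text{null}})^{N}=\tfrac1N-p_{\text{null}}+O(p_{\text{null}}^2)$, the $k=N-1$ term gives $\tfrac{N}{N-1}p_{\text{null}}+O(p_{\text{null}}^2)$, every term with $k\leqslant N-2$ is $O(p_{\text{null}}^2)$, and the normalizer is $1+O(p_{\text{null}}^{N})$. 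Collecting these yields $\mathsf{E}[1/N_{\text{e}}\mid N_{\text{e}}\geqslant 1]=\tfrac1N+\tfrac{p_{\text{null}}}{N-1}+O(p_{\text{null}}^2)$; a first-order expansion of the square root then moves the residual $O(p_{\text{null}}^2)$ outside the radical (with $N$ fixed), giving the stated form \eqref{Eq:conv_dig_high}.

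The step I expect to be the main obstacle is this last small-$p_{\text{null}}$ expansion of the conditional inverse moment of a binomial: the summands with small $k$ carry the large factor $1/k$, so I must confirm that their probabilities $\binom{N}{k}(1-p_{\text{null}})^{k}p_{\text{null}}^{N-k}=O(p_{\text{null}}^{N-k})$ genuinely push them into $O(p_{\text{null}}^2)$ and control them \emph{uniformly}, rather than merely being small termwise. A secondary but more conceptual point requiring care is the decoupling in the second step --- justifying the interchange of the gradient-noise expectation with the geometric (count) expectation, and the claim that conditioning on $N_{\text{e}}=k$ renders the per-round $K_m$ i.i.d.\ positive-Poisson; the telescoping descent itself is routine once the conditioning is fixed.
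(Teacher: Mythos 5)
Your proposal is correct and follows essentially the same route as the paper's proof in Appendix~\ref{App:Spa_conv_dig_high}: the smoothness-based single-step bound telescoped over effective rounds, the decoupling $\mathsf{E}\bigl[\tfrac{1}{N_{\text{e}}}\sum_{m}1/K_m \,\big|\, N_{\text{e}}\geqslant 1\bigr]=\mathsf{E}[1/K\mid K>0]$ via the i.i.d.\ per-round counts, the learning-rate substitution yielding the prefactor $\bigl(\mathsf{E}[1/N_{\text{e}}\mid N_{\text{e}}\geqslant 1]\bigr)^{1/2}$, and the small-$p_{\text{null}}$ expansion $\tfrac1N+\tfrac{p_{\text{null}}}{N-1}+O(p_{\text{null}}^2)$. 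The only cosmetic difference is that you expand the binomial inverse moment termwise (which is immediately uniform since $N$ is fixed and the sum finite), whereas the paper first invokes the closed-form identity of Stephan \cite{stephan1945expected} and then expands; the obstacles you flag are not actual gaps.
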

\begin{proof}
    See Appendix \ref{App:Spa_conv_dig_high}.
\end{proof}

Comparing Theorems \ref{Theo:Conv_dig_low} and \ref{Theo:Conv_dig_high}, when $p_{\text{null}}$ is small, one can conclude that high mobility slightly reduces  the spatial  convergence rate, which is  averaged over non-empty cells,  approximately by the factor of $\sqrt{1 + p_{\text{null}} }$. However, it should be emphasized that the percentage of non-empty cells in the case of high mobility is larger that that in the case of low mobility, namely $(1 - p_{\text{null}}^{N})$ versus $(1 - p_{\text{null}})$. If all cells are considered, the opposite conclusion can be  drawn based on  the following learning-latency analysis.

To this end, the above result is applied to analyzing  the learning latency in the case of high-mobility. Similar to \eqref{Eq:Spa_conv_cri_low}, a sufficient condition for meeting the spatial convergence criterion in \eqref{Eq:spa_con_cri} is obtained as 
\begin{eqnarray}\label{Eq:Spa_conv_cri_high}
\mathsf{E}[\bar{g}_0(N)\mid N_{\text{e}} \geqslant 1] &\leqslant& \frac{(\delta-p_{\text{null}}^{N}) \varepsilon_0}{1-p_{\text{null}}^{N}}. 
\end{eqnarray}
Using the condition, the learning latency is analzyed and  compared with that in the case of low  mobility as discussed in Remark \ref{remark: dig_high_low_comp}. 

\begin{remark}[Learning Latency Comparison] \label{remark: dig_high_low_comp}\emph{For convenience, define the  constant $\varepsilon' = \frac{(\delta-p_{\text{null}}^N) \varepsilon_0}{1-p_{\text{null}}^N}$. Let  $N'_{\min}$  and $\bar{T}'_{\Sigma}$ denote the required number of rounds and learning latency under the sufficient convergence conditions in  \eqref{Eq:Spa_conv_cri_high}. Then they can be derived using Theorem \ref{Theo:Conv_dig_high}. Using the result and Remark \ref{remark:dig_lat_low},  since per-round latency is identical for both the cases of low and high mobility, the ratio of corresponding expected latency is equal that of the expected numbers of required rounds: 
\begin{eqnarray}
\frac{\bar{T}'_{\Sigma}}{\bar{T}_{\Sigma}} =\frac{\mathsf{E}[N'_{\min}]}{\mathsf{E}[N_{\min}]}&\approx& \frac{\varepsilon^2\sqrt{1 +p_{\text{null}} }}{(\varepsilon')^2}\nonumber\\
&\approx& 1 - \left(\frac{2}{\delta} -\frac{5}{2}\right)p_{\text{null}}, \qquad p_{\text{null}}\rightarrow 0. \nonumber
\end{eqnarray}
As suggested by the result, if $\delta$ is small, the learning latency  (in second) with high mobility is slightly smaller  than the low-mobility counterpart despite low-mobility having a faster convergence rate (in round) in non-empty cells. The reason is that in the former case, more cells are  able to support FEEL and hence a more relaxed spatial convergence criterion. Note that the above analysis is based on approximation and bounds. Therefore, the actual quantification may not be accurate despite yielding the correct conclusion. More significant latency reduction due to high mobility is observed from experimental results in the sequel. }
\end{remark}

\section{Spatial Convergence for the Analog-Transmission Case} \label{Sec:ana}

In the preceding section, spatial convergence of the FEEL is studied for the digital-transmission case. In this section, it is analyzed for the analog-transmission case that enables low-latency over-the-air  aggregation. We assume low mobility. The extension of the results to the case of high mobility is similar to that in the preceding section. As it yields no new insight, the details are omitted for brevity. By reuse of notation, identical symbols as used in the preceding section are also used to to denote their counterparts in the current case whenever  there is no confusion.

First, the distinction of the current case is the direct exposure of the received signal, namely over-the-air aggregated gradient, to inter-cell interference. The effect can be expressed mathematically by deriving the deviation of the aggregated gradient from the ground truth as follows.  From (\ref{Eq:ana_recv_signal}), the expectation of aggregated  gradient is  an unbiased estimate of the ground truth:
\begin{eqnarray}
    \mathsf{E}\left[\bar{\bm{g}}_0^{(n)}\right] = \mathsf{E}\left[\frac{\mathbf{I}_0^{(n)} \tilde{\sigma}}{ K\sqrt{\eta}}+\frac{1}{ K} \sum\limits_{X\in K} \tilde{\bm{g}}_{X}^{(n)}\right] =  \nabla F(\bm{w}^{(n)}), 
    \label{Eq:ex}
\end{eqnarray}
and its variance can be written as:
\begin{eqnarray}
    \mathsf{E}\left[||\bm{\bar{g}}_0^{(n)}-\nabla F(\bm{w}^{(n)})||^2\right] & = & \mathsf{E}\left[\left\|\frac{\mathbf{I}_0^{(n)} \tilde{\sigma}}{ K\sqrt{\eta}} + \left(\frac{1}{ K} \sum\limits_{X\in K} \tilde{\bm{g}}_{X}^{(n)} - \nabla F(\bm{w}^{(n)})\right)\right\|^2\right] \nonumber \\
    & \leqslant & \frac{\tilde{\sigma}^2 (\mathbf{I}_0^{(n)})^2}{\eta K^2}+\frac{\sigma^2}{K},
    \label{Eq:va}
\end{eqnarray}
where $K$ is the number of active devices in the inscribed  cell, $\mathcal{D}_0$,  of the typical cell. Given  \eqref{Eq:ex} and \eqref{Eq:va}, a similar result as in Lemma \ref{lemma:Sing_cell_conv} can be obtained as follows. 

\begin{lemma}[Fixed-Cell Convergence with Analog Transmission and Low-mobility] \label {lemma:ana_conv_single}\emph{ In this case, consider a fixed  cell with a given  number of active devices, $K$,  and the learning rate $\mu = \frac{1}{L_0\sqrt{N}}$, the expected averaged-gradient norm is bounded as follows:
    \begin{eqnarray}
        \bar{g}_0(K,N) \leqslant \frac{1}{\sqrt{N}} \left( (F_0-F^*)+\frac{\sigma^2}{K}+ \frac{\tilde{\sigma}^2}{K^2 \eta N}\sum_{n = 0}^{N-1}(\mathbf{I}_0^{(n)})^2 \right).
        \label{Eq:ana_conv_single}
    \end{eqnarray}
}
\end{lemma}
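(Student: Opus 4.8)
The plan is to reuse the nonconvex stochastic-gradient descent argument that underlies Lemma~\ref{lemma:Sing_cell_conv} (the template of \cite{bernstein2018signsgd}), changing only the per-round variance of the aggregated gradient estimate from the constant $\sigma^2/K$ to the analog bound in \eqref{Eq:va}. Throughout, I would condition on the (fixed) number of active devices $K$ and on the interference realizations $\{\mathbf{I}_0^{(n)}\}$, so that the only randomness averaged by $\mathsf{E}[\cdot]$ is that of the stochastic gradients along the descent trajectory; each $(\mathbf{I}_0^{(n)})^2$ is then carried through as a given quantity rather than averaged away.

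First I would apply the smoothness inequality \eqref{Eq:ass2} with $\bm{\beta}=\bm{w}^{(n+1)}$ and $\bm{w}=\bm{w}^{(n)}$, substitute the update $\bm{w}^{(n+1)}=\bm{w}^{(n)}-\mu\bar{\bm{g}}_0^{(n)}$ from \eqref{Eq:global_update}, and bound each $L_i$ by $L_0=\|\bm{L}\|_\infty$, giving the one-step descent inequality
\[
F(\bm{w}^{(n+1)}) \leq F(\bm{w}^{(n)}) - \mu\,\nabla F(\bm{w}^{(n)})^T\bar{\bm{g}}_0^{(n)} + \frac{L_0\mu^2}{2}\left\|\bar{\bm{g}}_0^{(n)}\right\|^2 .
\]
Taking the conditional expectation over the round-$n$ stochastic gradients and using the unbiasedness in \eqref{Eq:ex} converts the inner-product term into $-\mu\|\nabla F(\bm{w}^{(n)})\|^2$; the bias--variance decomposition $\mathsf{E}\|\bar{\bm{g}}_0^{(n)}\|^2=\|\nabla F(\bm{w}^{(n)})\|^2+\mathsf{E}\|\bar{\bm{g}}_0^{(n)}-\nabla F(\bm{w}^{(n)})\|^2$ then lets me substitute the analog variance bound $\tfrac{\sigma^2}{K}+\tfrac{\tilde{\sigma}^2(\mathbf{I}_0^{(n)})^2}{\eta K^2}$ from \eqref{Eq:va} for the estimation-error second moment.

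Next I would isolate $(\mu-\tfrac{L_0\mu^2}{2})\|\nabla F(\bm{w}^{(n)})\|^2$, sum over $n=0,\dots,N-1$, and telescope the loss differences, bounding the accumulated decrease by $F(\bm{w}^{(0)})-\mathsf{E}[F(\bm{w}^{(N)})]\leq F_0-F^*$ via Assumption~1. Inserting the prescribed step size $\mu=\tfrac{1}{L_0\sqrt{N}}$ (which keeps the coefficient $\mu-\tfrac{L_0\mu^2}{2}$ positive since $\mu L_0=1/\sqrt{N}\leq 1$) and dividing by $N$ collects the constant terms into the $(F_0-F^*)$ and $\sigma^2/K$ contributions, while the round-dependent interference contributions aggregate into the time average $\tfrac{\tilde{\sigma}^2}{\eta K^2 N}\sum_{n=0}^{N-1}(\mathbf{I}_0^{(n)})^2$, reproducing \eqref{Eq:ana_conv_single}. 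The main obstacle is the bookkeeping of this interference term: unlike the digital case, the per-round variance is not constant even under low mobility (the data symbols $\mathbf{s}_{X'}^{(n)}$, and hence $\mathbf{I}_0^{(n)}$, still vary across rounds although device positions are fixed), so it must survive the telescoping as an explicit sum over $n$ rather than collapsing to $N$ copies of a single value; a secondary check is confirming that the $1/\sqrt{N}$ prefactor and the stated constants emerge exactly as in the derivation of Lemma~\ref{lemma:Sing_cell_conv}.
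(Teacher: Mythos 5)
Your overall route is exactly the paper's: the paper gives no standalone proof of this lemma, stating only that it follows from \eqref{Eq:ex} and \eqref{Eq:va} by rerunning the argument behind Lemma~\ref{lemma:Sing_cell_conv} (the descent inequality obtained from \eqref{Eq:ass2} and \eqref{Eq:global_update}, the bias--variance split, telescoping, and the step size $\mu=\frac{1}{L_0\sqrt{N}}$, as carried out explicitly in Appendix~\ref{App:Spa_conv_dig_high} for the digital case). Your bookkeeping of the interference contribution as an explicit sum over rounds, and your observation that it cannot be collapsed into $N$ copies of a single value even under low mobility, are both correct and reproduce the form of \eqref{Eq:ana_conv_single}.

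There is, however, one step that fails as literally written: you condition on the interference realizations $\{\mathbf{I}_0^{(n)}\}$ and then invoke the unbiasedness \eqref{Eq:ex} to convert $-\mu\,\nabla F(\bm{w}^{(n)})^T\,\mathsf{E}\bigl[\bar{\bm{g}}_0^{(n)}\bigr]$ into $-\mu\|\nabla F(\bm{w}^{(n)})\|^2$. These two moves are incompatible: \eqref{Eq:ex} holds only when the zero-mean interference is averaged over, whereas conditioned on the vector $\mathbf{I}_0^{(n)}$ the aggregate is biased, $\mathsf{E}\bigl[\bar{\bm{g}}_0^{(n)}\mid \mathbf{I}_0^{(n)}\bigr] = \nabla F(\bm{w}^{(n)}) + \frac{\tilde{\sigma}\,\mathbf{I}_0^{(n)}}{K\sqrt{\eta}}$, so a cross term $-\frac{\mu\tilde{\sigma}}{K\sqrt{\eta}}\nabla F(\bm{w}^{(n)})^T\mathbf{I}_0^{(n)}$ survives; it has no definite sign, does not telescope away, and absorbing it via Young's inequality would degrade the constants below those stated in \eqref{Eq:ana_conv_single}. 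The repair --- and evidently the paper's intended reading --- is to condition only on the interference \emph{power}: interpret $(\mathbf{I}_0^{(n)})^2$ as $\sum_{X'} P G_{X'}|X'|^{-\alpha}$, a quantity determined by the interferer geometry and fading (this is precisely what Appendix~\ref{App:pert} averages via Campbell's theorem), while still averaging over the interferers' zero-mean, unit-variance data symbols $\mathbf{s}_{X'}^{(n)}$. Under that conditioning the interference term in $\bar{\bm{g}}_0^{(n)}$ has zero conditional mean and conditional second moment $\frac{\tilde{\sigma}^2(\mathbf{I}_0^{(n)})^2}{\eta K^2}$, it is independent of the local gradient errors, the cross terms vanish both in your inner-product step and in the variance expansion \eqref{Eq:va}, and the remainder of your derivation then goes through verbatim.
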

Accounting for the random distribution of $K$, the  spatial-and-round averaged gradient norm follows from Lemma~\ref{lemma:ana_conv_single} as 
\begin{eqnarray}
    \!\mathsf{E}[\bar{g}_0(K,N)\mid K>0] \leqslant 
     \frac{1}{\sqrt{N}} \left( F_0-F^* + \ \sigma^2\mathsf{E} \left.\left[\frac{1}{K}\right|K>0\right] +  \frac{\tilde{\sigma}^2}{\eta}\mathsf{E}\left.\left[\frac{(\mathbf{I}_0^{(n)})^2}{K^2}\right|K>0\right]\right). \! 
    \label{Eq:ana_spa_conv_single}
\end{eqnarray}

Next, to derive a closed-form expression for the above upper bound, it is necessary to analyze the distribution of  $K$ as follows. In the digital-transmission case, a device is activated based on the criterion of successful transmission. In the current case, given truncated channel inversion in \eqref{Eq:power_control},  the criterion is for the device's fading gain to meet the truncation threshold. This results in the  \emph{activation probability} given as  $ p_{\text{a}} \triangleq \Pr(G_{X}\geqslant g_\text{th}) = e^{-g_\text{th}}$. It follows that 
\begin{eqnarray}\label{Eq: Ana_devices}
\Pr(K=j)=\frac{\exp \left(-\bar{K}'\right)\left(\bar{K}' \right)^{j}}{j !},
\end{eqnarray}
where $\bar{K}' = \pi R^2 \lambda_{\text{d}} p_{\text{a}}$ is the expected number of active  devices in the typical disk  cell, $\mathcal{D}_0$. 

Using \eqref{Eq:ana_spa_conv_single} and \eqref{Eq: Ana_devices}, we derive the main result of this section as follows. 

\begin{theorem}[Spatial Convergence with Analog Transmission and Low-mobility]\label{Theo:Conv_ana_low} \emph{
In this case, given the learning rate $\mu = \frac{1}{L_0\sqrt{N}}$, the expected averaged-gradient norm of a  typical \emph{non-empty} cell is bounded as follows:
    \begin{eqnarray}
        \!\mathsf{E}[\bar{g}_0(K,N)\mid K>0] \leqslant \frac{1}{\sqrt{N}} \left[ (F_0-F^*) + \sigma^2\phi  +  
       \frac{16 \tilde{\sigma}^2 (-\mathrm{Ei}(-g_{\text{th}}))}{p_{\text{a}}(\alpha^2-4)M}
        \left(\phi-\frac{\bar{K}'e^{-\bar{K}'}}{1-e^{-\bar{K}'}}\right) \right], \!
        \label{Eq:ana_conv}
    \end{eqnarray} 
    with  
    \begin{eqnarray}
\phi = \mathsf{E}\left.\left[\frac{1}{K}\right|K>0\right] = \frac{e^{-\bar{K}'}}{1-e^{-\bar{K}'}} \left[\mathrm{Ei}(\bar{K}')-\log(\bar{K}')-\gamma\right],
        \label{Eq:one_over_k_ana}
    \end{eqnarray}
where the expected number of active devices  $\bar{K}' = \pi R^2 \lambda_{\text{d}} p_{\text{a}}$,  the exponential integral $\mathrm{Ei}$ and Euler's Constant $\gamma$ follow those in Theorem~\ref{Theo:Conv_dig_low}, and the term 
$(-\mathrm{Ei}(-g_{\text{th}}))$ is positive. } 
\end{theorem}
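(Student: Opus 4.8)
The plan is to evaluate each of the three terms on the right-hand side of the starting bound \eqref{Eq:ana_spa_conv_single} in closed form, using the fact that by \eqref{Eq: Ana_devices} the number of active in-cell devices $K$ is Poisson with mean $\bar{K}'$. The term $F_0-F^*$ is given. For the second term $\sigma^2\phi$ with $\phi=\mathsf{E}[1/K\mid K>0]$, I would write the conditional expectation as $\frac{e^{-\bar{K}'}}{1-e^{-\bar{K}'}}\sum_{j\geqslant 1}\frac{1}{j}\frac{(\bar{K}')^{j}}{j!}$ and invoke the series representation of the exponential integral, $\sum_{j\geqslant 1}\frac{x^{j}}{j\,j!}=\mathrm{Ei}(x)-\gamma-\log x$, which immediately gives \eqref{Eq:one_over_k_ana}. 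This is routine once the Poisson law is in hand.

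The third (interference) term $\frac{\tilde{\sigma}^2}{\eta}\mathsf{E}\!\left[\frac{(\mathbf{I}_0^{(n)})^2}{K^2}\mid K>0\right]$ is where the stochastic-geometry and power-control ingredients enter; here $(\mathbf{I}_0^{(n)})^2$ denotes the per-coefficient interference power, already averaged over the unit-variance symbols. Because the interferers live in $\bar{\mathcal{C}}_0$ while the active devices are counted in the disjoint region $\mathcal{D}_0$, the independence of a PPP over disjoint sets lets me factor the expectation as $\mathsf{E}[(\mathbf{I}_0^{(n)})^2]\cdot\mathsf{E}[1/K^2\mid K>0]$. For the mean interference power I would apply Campbell's theorem to the thinned interferer process (density $\lambda_{\text{d}}/M$, fixed power $P$, unit-mean Rayleigh fading), integrating $|x|^{-\alpha}$ over the plane outside radius $R$, obtaining $\mathsf{E}[(\mathbf{I}_0^{(n)})^2]=\frac{2\pi\lambda_{\text{d}}PR^{2-\alpha}}{M(\alpha-2)}$. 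The scaling factor $\eta$ is then pinned down from the average-power budget $\mathsf{E}[P_X]=P$: substituting \eqref{Eq:power_control} and using independence of $G_X$ and $|X|$ gives $\mathsf{E}[P_X]=\eta\,\mathsf{E}[G_X^{-1}\mathds{1}(G_X\geqslant g_{\text{th}})]\,\mathsf{E}[|X|^{\alpha}]$, where $\mathsf{E}[G_X^{-1}\mathds{1}(G_X\geqslant g_{\text{th}})]=\int_{g_{\text{th}}}^{\infty}\frac{e^{-g}}{g}\,dg=-\mathrm{Ei}(-g_{\text{th}})$ and $\mathsf{E}[|X|^{\alpha}]=\frac{2R^{\alpha}}{\alpha+2}$ under $f_R$. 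Solving for $\eta$ and forming $\mathsf{E}[(\mathbf{I}_0^{(n)})^2]/\eta$, the transmit power $P$ cancels and the factors $(\alpha-2)$ and $(\alpha+2)$ combine into $\alpha^2-4$, leaving $\frac{\mathsf{E}[(\mathbf{I}_0^{(n)})^2]}{\eta}=\frac{4\bar{K}'(-\mathrm{Ei}(-g_{\text{th}}))}{p_{\text{a}}M(\alpha^2-4)}$ after substituting $\pi R^2\lambda_{\text{d}}=\bar{K}'/p_{\text{a}}$.

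The crux is the second inverse moment $\mathsf{E}[1/K^2\mid K>0]$, which has no elementary closed form (its series $\sum_{j}(\bar{K}')^{j}/(j^2 j!)$ is a nested exponential integral). The device I would use is the elementary inequality $\frac{1}{K^2}\leqslant\frac{4}{(K+1)^2}$, valid for every integer $K\geqslant 1$, which shifts the index so that the resulting sum collapses back onto the same $S_1(\bar{K}')=\sum_{j\geqslant 1}(\bar{K}')^{j}/(j\,j!)$ appearing in $\phi$. Concretely, rewriting $\frac{1}{(j+1)^2}\frac{x^{j}}{j!}=\frac{1}{j+1}\frac{x^{j}}{(j+1)!}$ and reindexing yields the exact identity $\mathsf{E}\!\left[\frac{1}{(K+1)^2}\mid K>0\right]=\frac{\phi}{\bar{K}'}-\frac{e^{-\bar{K}'}}{1-e^{-\bar{K}'}}$, whence
\begin{equation}
\mathsf{E}\!\left[\frac{1}{K^2}\,\Big|\,K>0\right]\leqslant\frac{4}{\bar{K}'}\left(\phi-\frac{\bar{K}'e^{-\bar{K}'}}{1-e^{-\bar{K}'}}\right).
\end{equation}

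Finally I would multiply this bound by the prefactor $\frac{4\bar{K}'\tilde{\sigma}^2(-\mathrm{Ei}(-g_{\text{th}}))}{p_{\text{a}}M(\alpha^2-4)}$ from the previous step; the two $\bar{K}'$ factors and the product $4\times 4$ produce the coefficient $\frac{16\tilde{\sigma}^2(-\mathrm{Ei}(-g_{\text{th}}))}{p_{\text{a}}(\alpha^2-4)M}$ multiplying $\left(\phi-\frac{\bar{K}'e^{-\bar{K}'}}{1-e^{-\bar{K}'}}\right)$, and then adding $F_0-F^*$ and $\sigma^2\phi$ and dividing by $\sqrt{N}$ reproduces \eqref{Eq:ana_conv}. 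The main obstacle is precisely this second-moment step: recognizing that the otherwise intractable $\mathsf{E}[1/K^2\mid K>0]$ can be controlled by the index-shifting bound and then re-expressed through the same $\mathrm{Ei}$-based quantity $\phi$. Everything else is bookkeeping, though keeping the interference convention and the power-budget constraint mutually consistent requires care.
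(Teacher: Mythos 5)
Your proof is correct and follows essentially the same route as the paper's Appendix~C: factor the interference term via independence, compute $\mathsf{E}[(\mathbf{I}_0^{(n)})^2]$ by Campbell's theorem over the thinned exterior process, pin down $\eta$ from the average-power budget using $\mathsf{E}[G_X^{-1}\mathds{1}(G_X\geqslant g_{\text{th}})]=-\mathrm{Ei}(-g_{\text{th}})$ and $\mathsf{E}[|X|^{\alpha}]=\frac{2R^{\alpha}}{\alpha+2}$, and control $\mathsf{E}[1/K^2\mid K>0]$ through the same device $1/K^2\leqslant 4/(K+1)^2$ followed by Poisson reindexing onto the $\mathrm{Ei}$-series defining $\phi$. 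Incidentally, your exact identity $\mathsf{E}[1/(K+1)^2\mid K>0]=\phi/\bar{K}'-\frac{e^{-\bar{K}'}}{1-e^{-\bar{K}'}}$ carries the factor $4$ through both terms correctly, whereas the paper's intermediate display \eqref{Eq:one_over_k_ana_sq} drops the $4$ from the second term — a typo, since the paper's subsequent bound \eqref{Eq:pert_lemma} and the theorem statement agree with your (tighter, correct) version.
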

\begin{proof}
See Appendix \ref{App:pert}.
\end{proof}

The second term on the right-hand side of \eqref{Eq:ana_conv}, $\sigma^2\phi$, represents the error of distributed gradient estimation and is observed to have the same form as its counterpart for the digital-transmission case in Theorem~\ref{Theo:Conv_dig_low} but with $\bar{K}$ replaced by $\bar{K}'$. Due to the different scalings of $\bar{K}$ and  $\bar{K}'$ w.r.t. the device density $\lambda_d$, there is an important difference between the two cases.  Specifically, as the density $\lambda_d$ increases, the term for the case of analog transmission   diminishes at an \emph{exponential rate}  while its  digital-transmission counterpart  converges to a \emph{constant} according to \eqref{Eq:mean_device_den}. This results  in  different accuracies of  distributed gradient estimation. On the other hand, analog transmission exposes learning to the effect of inter-cell interference as represented by the last term in \eqref{Eq:ana_conv}. Though higher density will cause larger interference, one can observe that this term also decays at an exponential rate as $\lambda_d$ grows. The fundamental reason is that more devices are involved with the increasing density and the interference can be effectively suppressed by gradient aggregation. Combining the above discussion suggests that analog transmission is preferred to digital transmission in a network with dense devices as also corroborated by experimental results. 

Next, we compare the relative effects of interference and distributed-data induced  gradient deviations from the ground truth. To this end, we consider  the following ratio between the last two terms of \eqref{Eq:ana_conv}, called \emph{interference effect}:
\begin{equation}
    \frac{\text{Interference induced deviation}}{\text{Data induced deviation}} = \frac{16 \tilde{\sigma}^2(-\mathrm{Ei}(-g_{\text{th}}))}{\sigma^2 p_{\text{a}}(\alpha^2-4)M}\cdot
    \left(1-\frac{\bar{K}'}{\mathrm{Ei}(\bar{K}')-\log(\bar{K}')-\gamma}\right).
\label{Eq:ici_effect}
\end{equation}
 The dependence of interference effect on different network parameters is discussed  as follows. 
\begin{itemize}

\item \emph{(Device-density/cell-size)} Increasing the device density or cell size both lead to linear growth of the expected number of devices, $\bar{K}'$. This reduces the interference effect in two aspects. One is the suppression of interference by more aggressively averaging via over-the-air aggregation. The other is larger path-loss for interference signals received at the BS. Mathematically, the interference reduction by increasing the cell size is reflected in the last term on the right hand side of \eqref{Eq:ici_effect}, $ \left(1-\frac{\bar{K}'}{\mathrm{Ei}(\bar{K}')-\log(\bar{K}')-\gamma}\right)$, being  a decreasing function of $\bar{K}'$. 

\item \emph{(Path-loss exponent)} The interference effect is observed to diminish as the path-loss exponent $\alpha$ increases, which reduces inter-cell interference by reducing spatial coupling between cells. 
\item \emph{(Processing gain)} The interference effect is inversely proportional to the processing gain of FHSS, $M$. Though the increase of $M$ seems to accelerate learning (in terms of rounds), it increases  per-round latency (in second) as the effective transmission bandwidth, namely $B/M$, reduces. See more discussion  in the sequel. 

\item \emph{(Channel truncation threshold)}  The interference effect decreases as a decreasing   threshold causes   the activation probability to grow. This holds only in the considered interference-limited regime. Reducing the  threshold  may not be desired in the noise-limited regime as it can cause devices with weak channels to participate in learning, amplifying the noise effect.  
\end{itemize}

Similar to the analysis shown in the digital case, given the spatial convergence target $\delta$ and the null probability $p_\text{null}$, a sufficient condition for meeting the spatial convergence criterion in \eqref{Eq:spa_con_cri} is obtained as 
\begin{eqnarray}\label{Eq:Spa_conv_cri_ana_low}
\mathsf{E}[\bar{g}_0(K,N)\mid K > 0] &\leqslant& \frac{(\delta-p_{\text{null}}) \varepsilon_0}{1-p_{\text{null}}}. 
\end{eqnarray}

\begin{remark}[Learning Latency] \emph{Under the sufficient condition  in \eqref{Eq:Spa_conv_cri_ana_low}, the expected minimum number of rounds, denoted as $\mathsf{E}[N_{\min}]$, has no simple form but can be upper bounded by the ratio between the upper bound on the averaged-gradient norm in Theorem \ref{Theo:Conv_ana_low} and  the  constant $\varepsilon = \frac{(\delta-p_{\text{null}}) \varepsilon_0}{1-p_{\text{null}}}$. The expected learning latency can be written  as 
\begin{eqnarray}
    \bar{T}_{\Sigma} = \mathsf{E}[N_{\min}]\cdot\underbrace{\left(\frac{SM}{B}+t_{\text{cmp}}+t_{\text{bc}}\right)}_{\text{Per-round latency}},
\end{eqnarray}
where $t_{\text{cmp}}$ and $t_{\text{bc}}$ are recalled to be constant computation and broadcasting latency, respectively.  One key observation is that increasing the processing gain $M$ increases per-round latency but reduces the expected number of rounds as mentioned earlier. This suggests the need of optimizing $M$ for latency minimization. 
}
\end{remark}

\section{Experimental Results}\label{Sec:Simu}

\subsection{Experimental Settings}
The experimental  settings are as follows unless specified otherwise. Consider a cellular network  in a $50 \times 50$ (unit area) horizontal area.  Each hexagon  cell's radius is  $1$ (unit length). FEEL is deployed in the cell located at the centre of the area. The path-loss exponent is set as $ \alpha = 4$, and total bandwidth is $B = 1 $ MHz. In the digital-transmission case, we assume that each coefficient  of a transmitted gradient is quantized into $16$ bits;  in the analog-transmission case, each coefficient  is mapped to a symbol. Transmission power in digital case is given by $P = 1$ for all the edge devices, while in the analog case, $\eta$ and $g_{\text{th}}$ are set to satisfy the average power constraint  $\mathsf{E}[P_X] = P$. The constant computing-and-broadcasting latency is assumed negligible in our experiments. Let each sample path be a sequence of typical-cell realizations over rounds. Then each result on spatially  averaged learning performance (i.e.,  test accuracy or learning latency)  is computed as the average of  $10$ sample paths to account for spatial  network distribution. 

The learning task is to perform the handwritten-digit recognition using the well-known MNIST dataset. There are total $60,000$ labeled training data samples in this dataset, each edge device is assigned $200$ samples by randomly sampling the  dataset. The classifier model is implemented using a $6$-layer \textit{convolutional neural network} (CNN) that consists of two $5 \times 5$ convolution layers with ReLu activation, each followed by $2 \times 2$ max pooling, a fully-connected layer with $512$ units, ReLu activation, and a final softmax output layer. 

\subsection{Effect of Device Mobility}

Consider the case of digital transmission. The curves of spatially averaged test accuracy versus the number of rounds are plotted in Fig. \ref{Fig:dev_mob} for both the cases of low and high mobility. Overall, one can observe  that  convergence rate with high mobility is faster than the low-mobility counterpart, which is aligned with the theoretic analysis.  In particular, when devices are sparse (i.e., $\lambda_d = 1$), the test accuracy with low mobility ($0.9$) is substantially lower than that with high mobility  ($> 0.95$). The reason is that the data size  and diversity are both insufficient, which, however, can be effectively overcome by mobility. The benefit of mobility in terms of convergence rate can also be observed even for a higher density, i.e., $\lambda_d = 5$. The difference attributed to  mobility  diminishes when the density is sufficiently high ($\lambda_d = 10$). 

\begin{figure*}[t!]
    \centering
    \includegraphics[scale=.5]{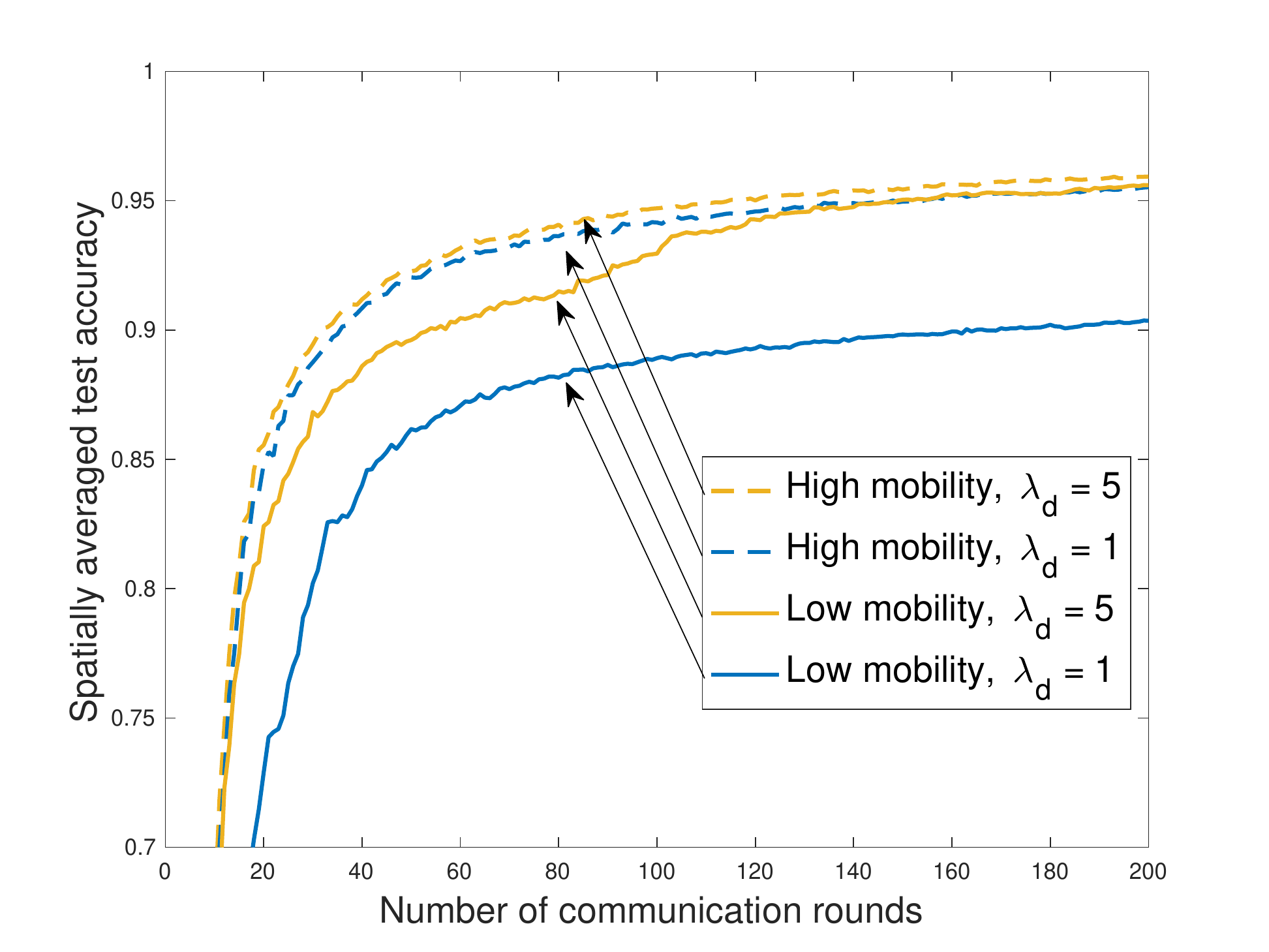}
    \caption{The spatially averaged test  accuracy versus the number of communication rounds for both the cases of  digital transmission  with low and high mobility and a varying device density $\lambda_d$.}
    \label{Fig:dev_mob}
\end{figure*}

\subsection{Effects of  Network Parameters}
Consider the case of digital transmission. The effects of network parameters, namely the device density, SIR threshold, and processing gain, on learning latency  are demonstrated in Fig. \ref{Fig:net_para}. The curves of spatially averaged learning latency versus network parameters are plotted for achieving the  target spatially averaged test accuracy of $95 \%$. Several  observations can be made. First, one can observe from  Fig.~\ref{Fig:net_para_a} that the  learning latency decreases and then saturates as $\lambda_d$ increases. The first part corresponds to the data-limited regime where increasing the density of devices contributes more training data and thereby reducing the needed number of rounds. The second part corresponds to the data-sufficient regime where more devices no longer yield an increase of the convergence rate. Second, it can be observed from Fig. \ref{Fig:net_para_b}  that the   latency first decreases  and then increases as the SIR threshold $\theta$ grows. This corroborates  Remark \ref{remark:dig_lat_low} based on analysis and suggests the need of optimizing $\theta$. Last, Fig.~\ref{Fig:net_para_c} shows the linear growth of  latency as the processing gain $M$ increases. Thus, for the current experimental settings, the minimum processing gain ($M=1$) is desired. This is aligned with Remark \ref{remark:dig_lat_low}. 

\begin{figure}[t!]
    \centering
    \subfigure[Effect of device density $\lambda_d$]
    {
        \includegraphics[width=.47\linewidth]{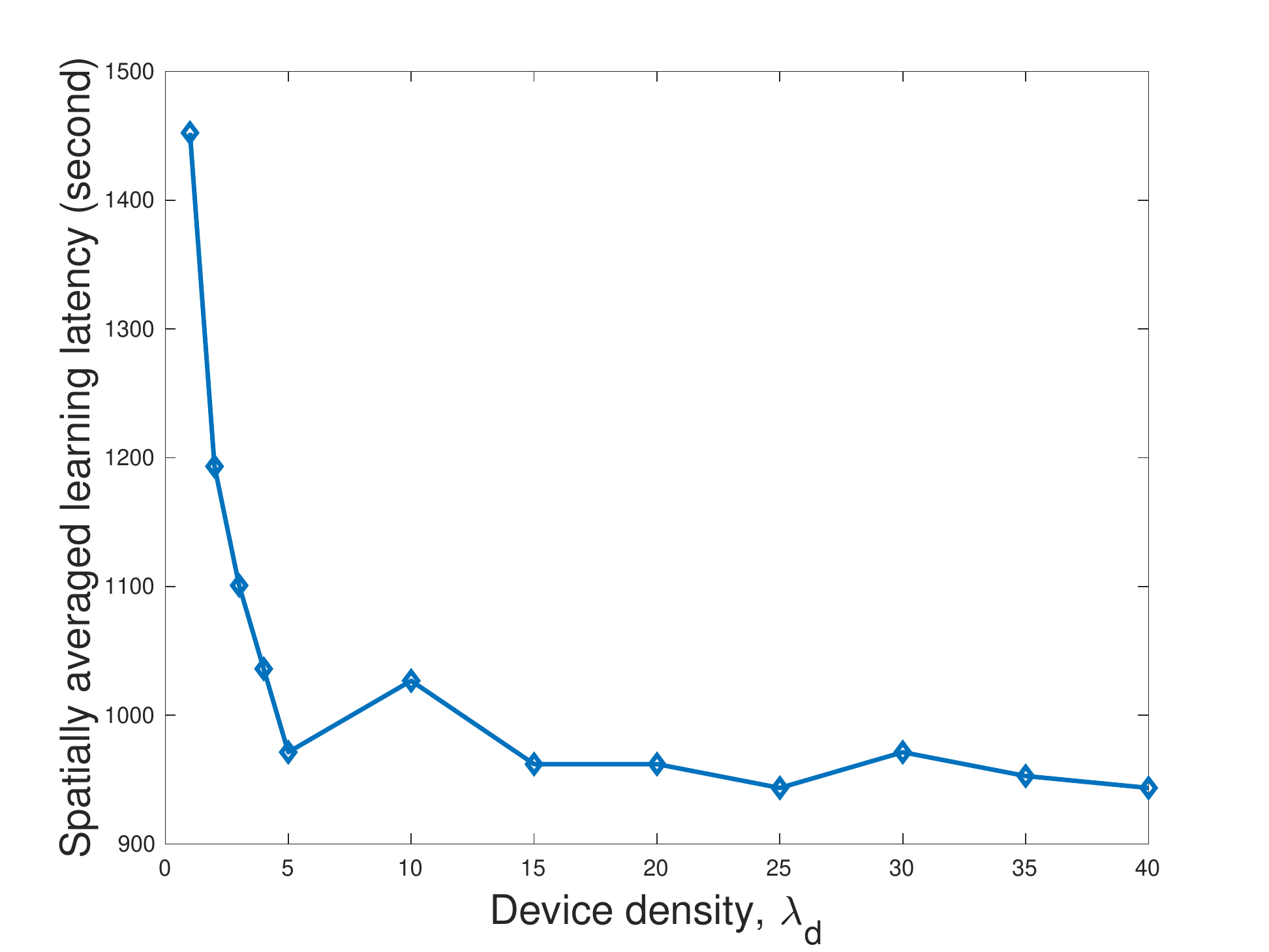}
        \label{Fig:net_para_a}
    }
    \subfigure[Effect of SIR threshold $\theta$]
    {
        \includegraphics[width=.47\linewidth]{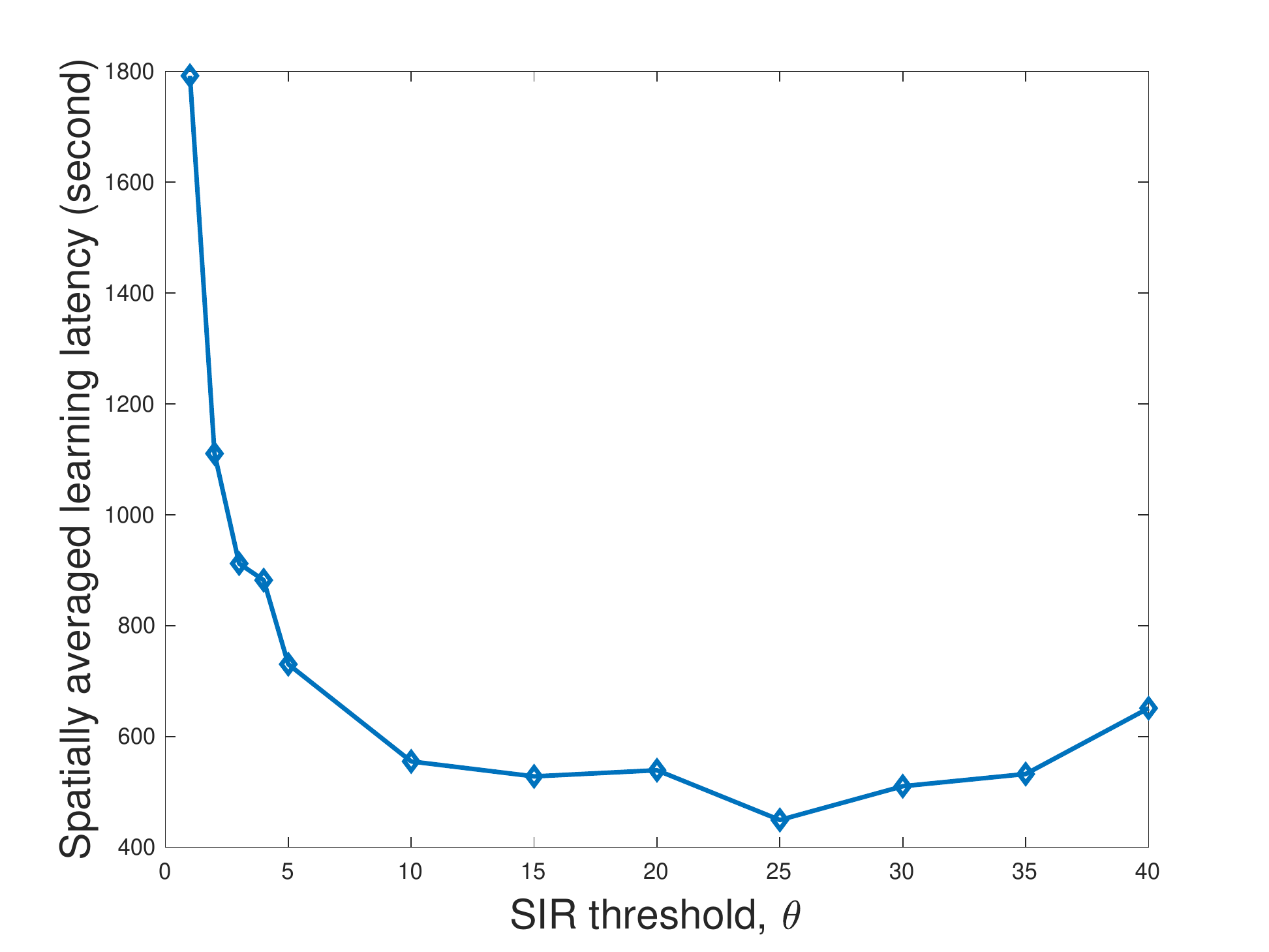}
        \label{Fig:net_para_b}
    }\\
    \subfigure[Effect of processing gain $M$]
    {
        \includegraphics[width=.47 \linewidth]{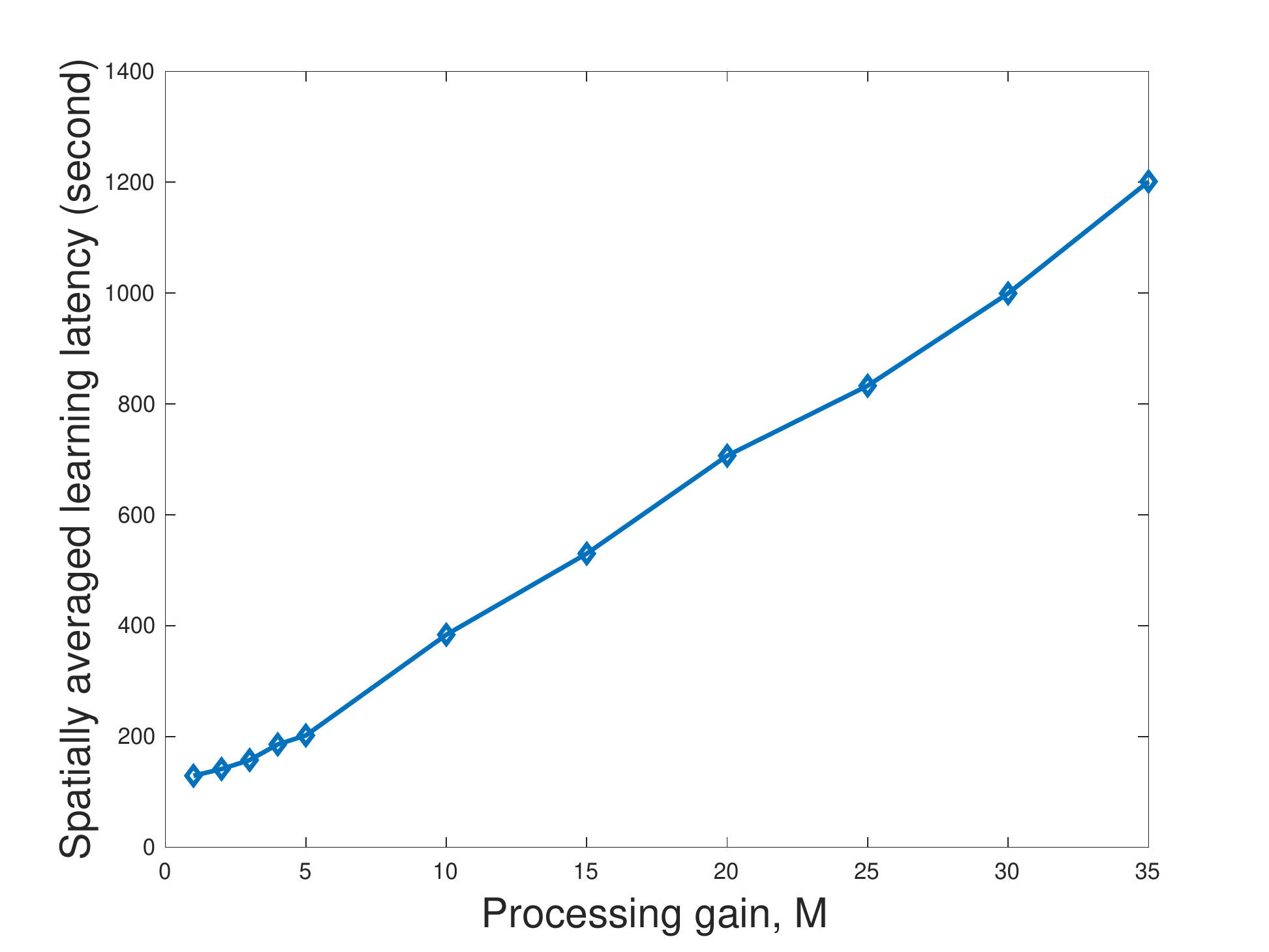}
        \label{Fig:net_para_c}
    }
    \caption{The effects of network parameters on learning latency with digital transmission for achieving  a target spatially averaged test accuracy of $95 \%$.}
    \label{Fig:net_para}
\end{figure}

\subsection{Comparison of Digital and Analog Transmission}
The spatially averaged test accuracies  for the cases of digital and analog transmission are compared in Fig.~\ref{Fig: dia_ana_comp_acc} in terms of spatially averaged test accuracy. Different device densities are considered. When the network is relatively sparse (i.e., $\lambda_d = 1$ or $3$), digital transmission is observed to outperform the analog scheme as the latter exposes uncoded signals to the perturbation of inter-cell interference. On the other hand, when there are many active devices (i.e., $\lambda_d = 30$), the aggressive over-the-air aggregation realized by analog transmission effectively suppresses interference by averaging. Consequently,  analog transmission  achieves better performance in this case. 

\begin{figure*}[t!]
    \centering
    \includegraphics[scale=.25]{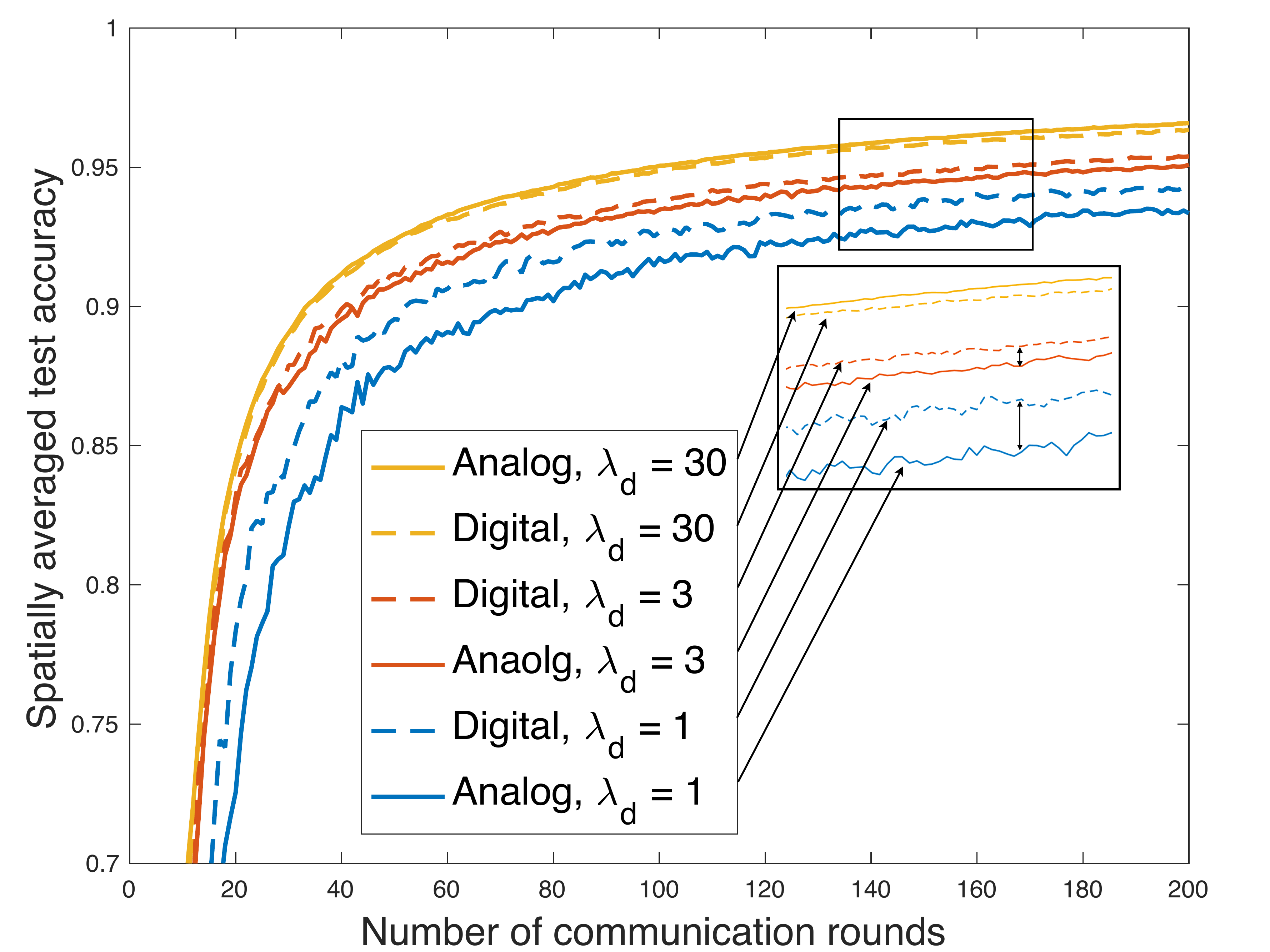}
    \caption{Comparison of spatially averaged test accuracies for the cases of digital and analog transmission for different device densities. }
    \label{Fig: dia_ana_comp_acc}
\end{figure*}

The learning performance for the cases of digital and analog transmission is further compared in Fig.~\ref{Fig: dia_ana_comp_lat} in terms of the required number of rounds and spatially averaged learning latency for the target spatially averaged test accuracy of $93 \%$. The learning latency for analog transmission is observed from  Fig.~\ref{Fig: dia_ana_comp_lat_a} to be much lower than the digital-transmission counterpart for both low-and-high device densities. One can observe from  Fig.~\ref{Fig: dia_ana_comp_lat_b} that in a sparse network ($\lambda_d \leq 10$), error-free transmission of digital transmission reduces the required number of rounds; in a dense network ($\lambda_d \geq 10$), the gain varnishes as analog transmission supports more active devices.  Regardless of this difference in terms of required rounds, the advantage of shorter per-round latency of analog transmission dominates, resulting in  the earlier observation from Fig.~\ref{Fig: dia_ana_comp_lat_a}.

\begin{figure}[t!]
    \centering
    \subfigure[Spatially Averaged Learning Latency]
    {
        \includegraphics[width=.47\linewidth]{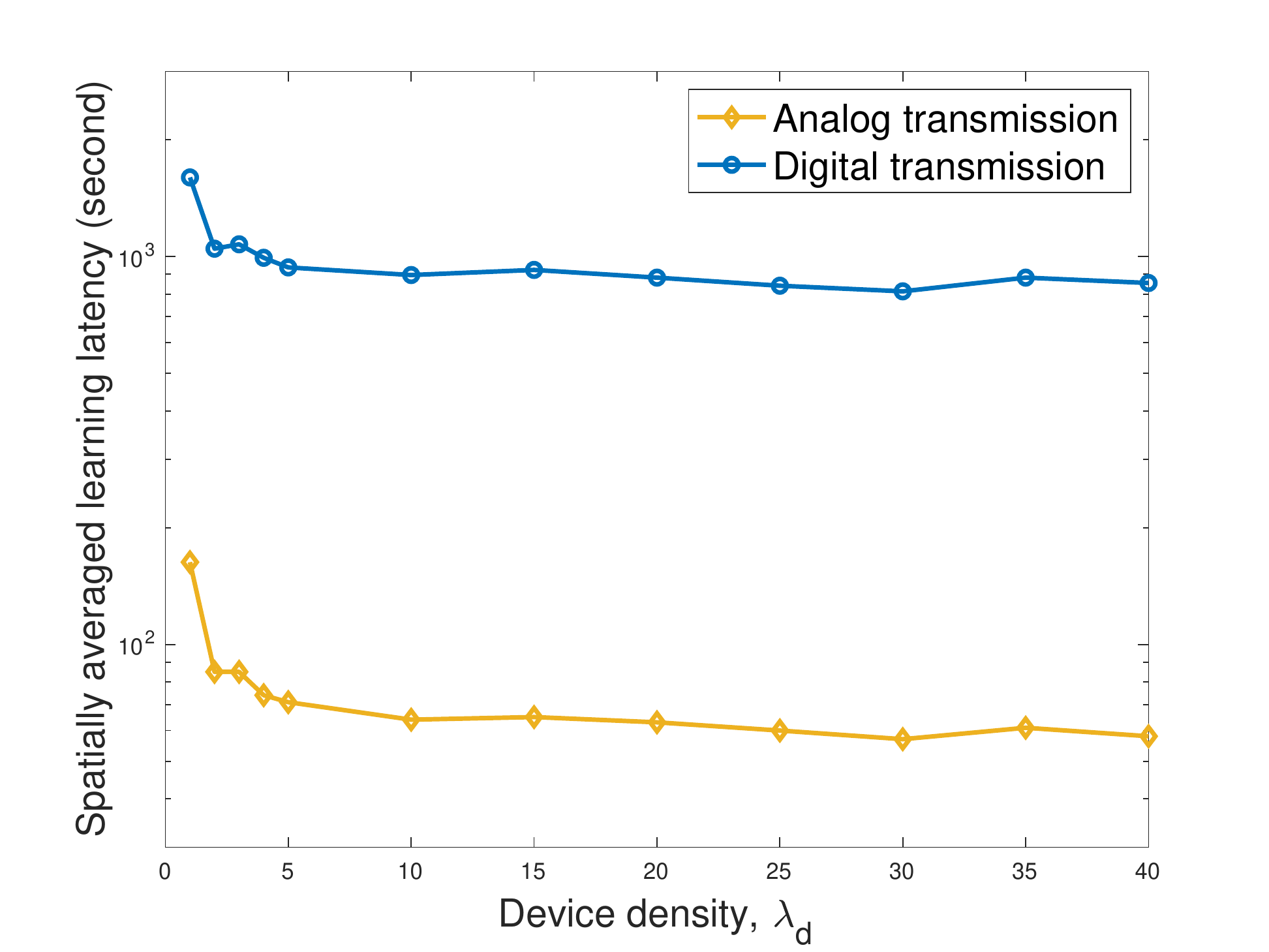}
        \label{Fig: dia_ana_comp_lat_a}
    }
        \subfigure[Required Number of Communication Rounds]
    {
        \includegraphics[width=.47\linewidth]{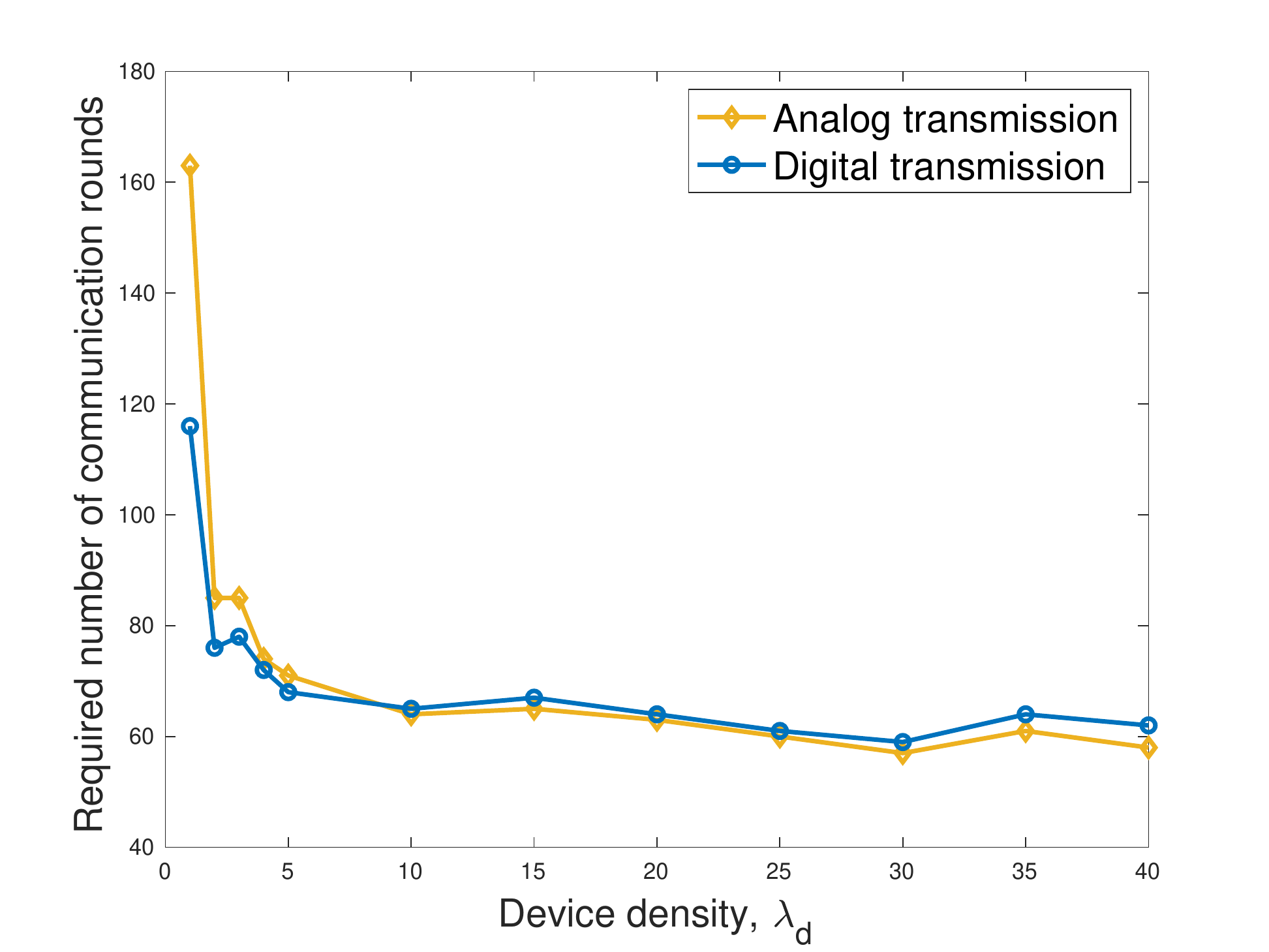}
        \label{Fig: dia_ana_comp_lat_b}
    }

    \caption{Learning-performance comparison between  the cases of digital and analog transmission in terms of: (a) spatially averaged learning latency  and (b) the required number of rounds for the targeted spatially averaged test accuracy of $95\%$.}
    \label{Fig: dia_ana_comp_lat}
\end{figure}

\section{Concluding Remarks}

In this work, we have investigated the spatial convergence  of  FEEL deployed in a typical cell of a large-scale cellular  network. Both the schemes of digital and analog transmission are considered. In terms of spatial convergence rate (in round), digital transmission is preferred for low-to-medium device densities while convergence with analog transmission is faster when devices are dense. On the other hand, in terms of learning latency (in second), analog transmission is always preferred due to its support of low-latency over-the-air aggregation. 

This work opens the direction of distributed edge learning in large-scale cellular networks, in which numerous topics warrant further investigation. In particular, a more complex network topology including both backhaul and radio-access links  can be considered to support hierarchical federated learning involving edge devices, edge servers and central-could servers. Moreover, the current work suggests the need of optimizing network parameters (e.g., SIR threshold), which can be further investigated to improve the learning performance. Furthermore, for the deployment of FEEL in 5G networks,  it is interesting to study the effects of advanced physical-layer techniques (i.e., massive MIMO and non-orthogonal access) on the spatial learning performance.

\appendices
\section{Proof of theorem \ref{Theo:Conv_dig_low}}\label{App:Spa_conv_dig_low}
It follows from Lemma \ref{lemma:Sing_cell_conv} that 
\begin{equation}
\mathsf{E}[\bar{g}_0(K,N)\mid K > 0] \leqslant \frac{1}{\sqrt{N}} \left( (F_0-F^*)+\sigma^2\mathsf{E}\left.\left[\frac{1}{K}\right|K>0\right]\right). 
\label{a1}
\end{equation}
Using the distribution of $K$ in Lemma \ref{lemma:dis_dig_device}, the last term in \eqref{a1} can be obtained as follows
    \begin{eqnarray}
        \mathsf{E}\left.\left[\frac{1}{K}\right|K>0\right] &=& \sum\limits_{j = 1}^\infty \frac{1}{j}p(A = j|A>0) \nonumber \\
        & = & \frac{\exp(-\pi R^2\lambda_{\text{d}} p_{\text{s}})}{1-\exp(-\pi R^2\lambda_{\text{d}} p_{\text{s}})}\sum\limits_{j = 1}^\infty \frac{1}{j} \frac{(\pi R^2\lambda_{\text{d}} p_{\text{s}})^j}{j!} \nonumber \\ 
        & \stackrel{(b)}{=} & \frac{\exp(-\pi R^2\lambda_{\text{d}} p_{\text{s}})}{1-\exp(-\pi R^2\lambda_{\text{d}} p_{\text{s}})} \left(\mathrm{Ei}(\pi R^2\lambda_{\text{d}} p_{\text{s}})-\log(\pi R^2\lambda_{\text{d}} p_{\text{s}})-\gamma\right),
        \label{Eq:E_1_K}
    \end{eqnarray}
where (b) is obtained using \cite[(3.16)]{chao1972negative}. The desired result follows.

\section{Proof of Theorem \ref{Theo:Conv_dig_high}}\label{App:Spa_conv_dig_high}
First, we analyze  the convergence in the  typical cell condition on its being non-empty. Based on Assumption 2 and substituting \eqref{Eq:global_update} into \eqref{Eq:ass2},   the single-step improvement of the loss function is obtained as
\begin{eqnarray}
    F(\bm{w}^{(n+1)})-F(\bm{w}^{(n)}) &\leqslant& \left(\nabla F(\bm{w}^{(n)})\right)^{T} \left(\bm{w}^{(n+1)}-\bm{w}^{(n)}\right)+\sum_{i=1}^{d} \frac{L_{i}}{2}\left(\bm{w}^{(n+1)}-\bm{w}^{(n)}\right)_i^{2} \nonumber\\
    &\leqslant& -\mu \left(\nabla F(\bm{w}^{(n)})\right)^{T} \bar{\bm{g}}^{(n)} + \frac{\|L\|_{\infty}}{2} \mu^2 \|\bar{\bm{g}}^{(n)}\|^2,
    \label{c1}
\end{eqnarray}
where $\bar{\bm{g}}^{(n)}$ is the aggregated gradient received at the BS in the  $n^{th}$  round. Note that two random processes underpinning the spatial learning process are  $\{K^{(n)}; n\geqslant 0\}$ and $\{\bar{\bm{g}}^{(n)}; n\geqslant 0\}$. Consider the $(n+1)^{th}$ communication round, condition on fixed $K^{(n)} > 0$ and the model updated  in the preceding round, taking expectation of  both sides of (\ref{c1}) yields 
\begin{eqnarray}
    \mathsf{E}\left[F(\bm{w}^{(n+1)})-F(\bm{w}^{(n)})| K^{(n)} \right] &\leqslant& -\mu \left(\nabla F(\bm{w}^{(n)})\right)^{T} \left( \frac{1}{K^{(n)}} \sum\limits_{X\in \mathcal{D}_0^{(n)}}  \mathsf{E}\left[\tilde{\bm{g}}_{X}^{(n)}\right]\right) \nonumber\\
    &+& \frac{L_0\mu^2}{2} \mathsf{E} \left. \left[\|\frac{1}{K^{(n)}} \sum\limits_{X\in \mathcal{D}_0^{(n)}} \tilde{\bm{g}}_{X}^{(n)}\|^2 \right|K^{(n)}\right],
    \label{c2}
\end{eqnarray}
where $L_0 = \|L\|_{\infty}$. Based on  Assumption 3 and Lemma 3, (\ref{c2}) can be written as:
\begin{eqnarray}
    \mathsf{E}\left[F(\bm{w}^{(n+1)})-F(\bm{w}^{(n)})| K^{(n)} \right] &\leqslant&
    -\mu \|\nabla F(\bm{w}^{(n)})\|^2 + \frac{L_0\mu^2}{2} \left( \|\nabla F(\bm{w}^{(n)})\|^2 + \frac{\sigma^2}{K^{(n)}}\right)  \nonumber \\
    & = &
    \left(-\mu + \frac{L_0\mu^2}{2}\right)\|\nabla F(\bm{w}^{(n)})\|^2 +\frac{L_0\mu^2\sigma^2}{2K^{(n)}}.
\end{eqnarray}
Conditioning on the effective number of rounds $N_{\text{e}}\geq 1$, performing  a telescoping sum over the iterations gives
\begin{eqnarray}
F_0-F^* & \geq & F_0 - \mathsf{E} [F(\bm{w}^{(n)})|N_{\text{e}}\geq 1]  \nonumber \\
& = & \mathsf{E}\left.\left[\sum_{n = 0}^{N_{\text{e}}-1} \left(F(\bm{w}^{(n)})-F^{(n+1)}\right)\right|N_{\text{e}}\geq 1\right] \nonumber \\
& \geq &  \left(\mu - \frac{L_0\mu^2}{2}\right)\sum_{n = 0}^{N_{\text{e}}-1}\|\nabla F(\bm{w}^{(n)})\|^2 - \frac{L_0\mu^2\sigma^2}{2}\sum_{n = 0}^{N_{\text{e}}-1}\frac{1}{K^{(n)}}, \qquad N_{\text{e}}\geq 1.
\label{c4}
\end{eqnarray}
Since  $N_{\text{e}}$ is a random variable, it follows that 
\begin{eqnarray}
    &~& \!(F_0-F^*)\mathsf{E}\left.\left[\frac{1}{N_{\text{e}}} \right| N_{\text{e}}\geqslant 1\right]\geqslant \nonumber \\ 
    &  & \left(\mu - \frac{L_0\mu^2}{2}\right) \mathsf{E}\left.\left[\frac{1}{N_{\text{e}}}\sum_{n = 0}^{N_{\text{e}}-1}\|\nabla F(\bm{w}^{(n)})\|^2\right|N_{\text{e}}\geqslant 1\right]-\frac{L_0\mu^2\sigma^2}{2}\mathsf{E}\left.\left[\frac{1}{N_{\text{e}}}\sum_{n = 0}^{N_{\text{e}}-1}\frac{1}{K^{(n)}}\right|N_{\text{e}}\geqslant 1\right]. \nonumber\!
\end{eqnarray}
By rearranging  the terms, 
\begin{eqnarray}
    &&\mathsf{E}\left.\left[\frac{1}{N_{\text{e}}}\sum_{n = 0}^{N_{\text{e}}-1}\|\nabla F(\bm{w}^{(n)})\|^2\right|N_{\text{e}} \geqslant 1\right] \leqslant \nonumber \\
    && \qquad\qquad \frac{(F_0-F^*)\mathsf{E}\left.\left[\frac{1}{N_{\text{e}}} \right| N_{\text{e}}\geqslant 1\right] + \frac{L_0\mu^2\sigma^2}{2}\mathsf{E}\left.\left[\frac{1}{N_{\text{e}}}\sum_{n = 0}^{N_{\text{e}}-1}\frac{1}{K^{(n)}}\right|N_{\text{e}}\geqslant 1\right]}{\mu-L_0\mu^2/2}.     \label{c5a}
\end{eqnarray}
On the other hand, since $\mu = \frac{1}{L_0}\left(\mathsf{E}\left.\left[\frac{1}{N_{\text{e}}} \right| N_{\text{e}}\geqslant 1\right]\right)^{1/2}$, 
\begin{eqnarray}
    \frac{1}{\mu-L_0\mu^2/2} &=& \frac{2L_0}{\left(\mathsf{E}\left.\left[\frac{1}{N_{\text{e}}} \right| N_{\text{e}}\geqslant 1\right]\right)^{1/2}\left(2-\left(\mathsf{E}\left.\left[\frac{1}{N_{\text{e}}} \right| N_{\text{e}}\geqslant 1\right]\right)^{1/2}\right)} \nonumber \\
    &\overset{(a)}{\leqslant}&\frac{2L_0}{\left(\mathsf{E}\left.\left[\frac{1}{N_{\text{e}}} \right| N_{\text{e}}\geqslant 1\right]\right)^{1/2}},
    \label{c5}
\end{eqnarray}
where (a) follows from $\left(\mathsf{E}\left.\left[\frac{1}{N_{\text{e}}} \right| N_{\text{e}}\geqslant 1\right]\right)^{1/2}\leqslant 1$. By combining \eqref{c5a} and \eqref{c5}, and replacing \!$\mathsf{E}\left.\left[\frac{1}{N_{\text{e}}}\sum_{n = 0}^{N_{\text{e}}-1}\|\nabla F(\bm{w}^{(n)})\|^2\right|N_{\text{e}}\geqslant 1\right]$ with $\bar{g}_0(N)$, 
\begin{eqnarray}
    \bar{g}_0(N)  \leqslant \left(\mathsf{E}\left.\left[\frac{1}{N_{\text{e}}} \right| N_{\text{e}}\geqslant 1\right]\right)^{1/2} \left( (F_0-F^*)+\sigma^2\mathsf{E}\left.\left[\frac{1}{N_{\text{e}}}\sum_{n = 0}^{N_{\text{e}}-1}\frac{1}{K^{(n)}}\right| N_{\text{e}}\geqslant 1\right] \right).
\end{eqnarray}
For spatial convergence, take expectation over  the spatial  distribution of edge devices, one can obtain the following upper bound on the spatial-and-round averaged gradient: 
\begin{eqnarray}
    \mathsf{E}[\bar{g}_0(N)\mid N_{\text{e}} \geqslant 1] \leqslant \left(\mathsf{E}\left.\left[\frac{1}{N_{\text{e}}} \right| N_{\text{e}}\geqslant 1\right]\right)^{1/2} \left( (F_0-F^*)+\sigma^2\mathsf{E}\left.\left[\frac{1}{K^{(n)}}\right|K^{(n)}>0\right]\right). \label{Eq:ExpGrad:a}
\end{eqnarray}
The expression for the term in \eqref{Eq:ExpGrad:a},  $ \mathsf{E}\left.\left[\frac{1}{N_{\text{e}}} \right| N_{\text{e}}\geqslant 1\right]$,  can be obtained as
\begin{eqnarray}
    \mathsf{E}\left.\left[\frac{1}{N_{\text{e}}} \right| N_{\text{e}}\geqslant 1\right]
    &\!=\!& \sum_{i=1}^{N} \frac{1}{i} \left(\begin{array}{c}
        N \\
        i
        \end{array}\right)
    \frac{\left(1-p_{\text{null}}\right)^{i} \left( p_{\text{null}}\right)^{N-i}}{1 - p_{\text{null}}^{N}} \nonumber\\
    &\overset{\text{(a)}}{=}& \frac{1}{1 - p_{\text{null}}^{N}} \sum\limits_{i = 1}^{N}\frac{p_{\text{null}}^{i-1}-p_{\text{null}}^N}{N-i+1}  \nonumber \\
    &=& \frac{1}{N}+ \frac{p_{\text{null}}}{N-1}+O(p_{\text{null}}^2), \qquad p_{\text{null}}\rightarrow 0, \label{Eq:ExpGrad:b}
\end{eqnarray}
where (a) is based on (10) in \cite{stephan1945expected}. Substituting \eqref{Eq:ExpGrad:b} and \eqref{Eq:E_1_K} into  \eqref{Eq:ExpGrad:a} yields the desired result. 

\section{Proof of Theorem \ref{Theo:Conv_ana_low}}\label{App:pert}
Starting from \eqref{Eq:ana_spa_conv_single}, the proof focuses on deriving an expression for the  perturbation term caused by inter-cell interference, namely $\frac{\tilde{\sigma}^2}{\eta}\mathsf{E}\left.\left[\frac{(\mathbf{I}_0^{(n)})^2}{K^2}\right|K>0\right]$. Due to the independence between the interference $(\mathbf{I}_0^{(n)})^2$ and the number of  devices $K$, 
\begin{equation}
\frac{\tilde{\sigma}^2}{\eta}\mathsf{E}\left.\left[\frac{(\mathbf{I}_0^{(n)})^2}{K^2}\right|K>0\right]= \frac{\tilde{\sigma}^2}{\eta}\mathsf{E}\left[(\mathbf{I}_0^{(n)})^2\right]\mathsf{E}\left.\left[\frac{1}{K^2}\right|K>0\right]. \label{Eq:Interf}
\end{equation}
First, by applying Campbell's Theorem \cite{kingman2005p}, 
\begin{eqnarray}
    \mathsf{E}\left[(\mathbf{I}_0^{(n)})^2\right] = \frac{2\pi\lambda_{\text{d}} P R^{2-\alpha}}{(\alpha-2)M}.
    \label{Eq:epi}
\end{eqnarray}
Next, 
\begin{eqnarray}
    \mathsf{E}\left.\left[\frac{1}{K^2}\right|K>0\right] & = & \mathsf{E}\left.\left[\left(1+\frac{1}{K}\right)^2\cdot\frac{1}{(K+1)^2}\right|K>0\right] \nonumber \\
    &\leqslant& 4\mathsf{E}\left.\left[\frac{1}{(K+1)^2}\right|K>0\right].
\end{eqnarray}
Using the distribution of $K$ in \eqref{Eq: Ana_devices}, 
\begin{eqnarray}
\mathsf{E}\left.\left[\frac{1}{K^2}\right|K>0\right] 
    & \leq & \frac{4}{1-\exp(-\bar{K}')}\sum\limits_{j = 1}^\infty \frac{\exp(-\bar{K}')}{(j+1)^2} \frac{(\bar{K}')^j}{j!} \nonumber \\
    & = & \frac{4}{1-\exp(-\bar{K}')}\left(\sum\limits_{j = 0}^\infty \frac{\exp(-\bar{K}')}{(j+1)^2} \frac{(\bar{K}')^j}{j!}-\exp(-\bar{K}')\right)
    \nonumber \\
    & = & \frac{4}{1-\exp(-\bar{K}')}\left(\frac{1}{\bar{K}'}\sum\limits_{j = 1}^\infty \frac{\exp(-\bar{K}')}{j} \frac{(\bar{K}')^j}{j!}-\exp(-\bar{K}')\right) \nonumber \\
    & = & \frac{4}{\bar{K}'}\mathsf{E}\left.\left[\frac{1}{K}\right|K>0\right]-\frac{\exp(-\bar{K}')}{1-\exp(-\bar{K}')}.
    \label{Eq:one_over_k_ana_sq}
\end{eqnarray}
Combining \eqref{Eq:Interf}, \eqref{Eq:epi}, and  \eqref{Eq:one_over_k_ana_sq} gives
    \begin{eqnarray}
        \frac{\tilde{\sigma}^2}{\eta}\mathsf{E}\left.\left[\frac{(\mathbf{I}_0^{(n)})^2}{K^2}\right|K>0\right] \leqslant \frac{8P \tilde{\sigma}^2}{\eta p_{\text{s}}(\alpha-2)MR^{\alpha}}\cdot\left(\mathsf{E}\left.\left[\frac{1}{K}\right|K>0\right]-\frac{\bar{K}'e^{-\bar{K}'}}{1-e^{-\bar{K}'}}\right). 
        \label{Eq:pert_lemma}
    \end{eqnarray}
On the other hand, based on truncated channel inversion in \eqref{Eq:power_control},  the expected transmission power of a device is obtained as 
\begin{eqnarray}
    \mathsf{E}[P_X] & = & \mathsf{E}\left[\frac{\eta}{G_{X}|X|^{-\alpha}}\right]  \nonumber \\
    & = & \eta\mathsf{E}\left[\frac{1}{G_X}\right]\cdot\int_0^R  r^{\alpha}f_R(r) dr  \nonumber \\
    & = & \frac{2\eta R^{\alpha}(-\mathrm{Ei}(-g_{\text{th}}))}{\alpha+2}. 
\end{eqnarray}
Under  the average power constraint $\mathsf{E}[P_X] = P$, 
\begin{eqnarray}\label{Eq:eta}
    \eta = \frac{P(\alpha+2)}{2R^{\alpha}(-\mathrm{Ei}(-g_{\text{th}}))}.
\end{eqnarray}
Substituting  \eqref{Eq:eta} into \eqref{Eq:pert_lemma} gives the desired result.

\bibliographystyle{IEEEtran}

\begin{thebibliography}{10}
\providecommand{\url}[1]{#1}
\csname url@samestyle\endcsname
\providecommand{\newblock}{\relax}
\providecommand{\bibinfo}[2]{#2}
\providecommand{\BIBentrySTDinterwordspacing}{\spaceskip=0pt\relax}
\providecommand{\BIBentryALTinterwordstretchfactor}{4}
\providecommand{\BIBentryALTinterwordspacing}{\spaceskip=\fontdimen2\font plus
\BIBentryALTinterwordstretchfactor\fontdimen3\font minus
  \fontdimen4\font\relax}
\providecommand{\BIBforeignlanguage}[2]{{%
\expandafter\ifx\csname l@#1\endcsname\relax
\typeout{** WARNING: IEEEtran.bst: No hyphenation pattern has been}%
\typeout{** loaded for the language `#1'. Using the pattern for}%
\typeout{** the default language instead.}%
\else
\language=\csname l@#1\endcsname
\fi
#2}}
\providecommand{\BIBdecl}{\relax}
\BIBdecl

\bibitem{8736011}
Z.~{Zhou}, X.~{Chen}, E.~{Li}, L.~{Zeng}, K.~{Luo}, and J.~{Zhang}, ``Edge
  intelligence: Paving the last mile of artificial intelligence with edge
  computing,'' \emph{Proc. of the IEEE}, vol. 107, no.~8, pp. 1738--1762, 2019.

\bibitem{zhu2020toward}
G.~Zhu, D.~Liu, Y.~Du, C.~You, J.~Zhang, and K.~Huang, ``Toward an intelligent
  edge: wireless communication meets machine learning,'' \emph{IEEE Comm.
  Magazine}, vol.~58, no.~1, pp. 19--25, 2020.

\bibitem{lim2020federated}
W.~Y.~B. Lim, N.~C. Luong, D.~T. Hoang, Y.~Jiao, Y.-C. Liang, Q.~Yang,
  D.~Niyato, and C.~Miao, ``Federated learning in mobile edge networks: A
  comprehensive survey,'' \emph{IEEE Commun. Surveys Tuts.}, 2020.

\bibitem{chen2019joint}
M.~{Chen}, Z.~{Yang}, W.~{Saad}, C.~{Yin}, H.~V. {Poor}, and S.~{Cui}, ``A
  joint learning and communications framework for federated learning over
  wireless networks,'' \emph{IEEE Trans. Wireless Commun.}, vol.~20, no.~1, pp.
  269--283, 2021.

\bibitem{shi2020joint}
W.~{Shi}, S.~{Zhou}, Z.~{Niu}, M.~{Jiang}, and L.~{Geng}, ``Joint device
  scheduling and resource allocation for latency constrained wireless federated
  learning,'' \emph{IEEE Trans. Wireless Commun.}, vol.~20, no.~1, pp.
  453--467, 2021.

\bibitem{ren2020scheduling}
\BIBentryALTinterwordspacing
J.~Ren, Y.~He, D.~Wen, G.~Yu, K.~Huang, and D.~Guo, ``Scheduling in cellular
  federated edge learning with importance and channel awareness,'' 2020.
  [Online]. Available: \url{arXiv:2004.00490v2}
\BIBentrySTDinterwordspacing

\bibitem{yang2020delay}
\BIBentryALTinterwordspacing
Z.~Yang, M.~Chen, W.~Saad, C.~S. Hong, M.~Shikh-Bahaei, H.~V. Poor, and S.~Cui,
  ``Delay minimization for federated learning over wireless communication
  networks,'' 2020. [Online]. Available: \url{https://arxiv.org/abs/2007.03462}
\BIBentrySTDinterwordspacing

\bibitem{song2020optimal}
J.~{Song} and M.~{Kountouris}, ``Optimal number of edge devices in distributed
  learning over wireless channels,'' in \emph{IEEE  Workshop
on  Signal Process. Adv. Wireless Commun. (SPAWC) (virtual)}, May 26-29, 2020.

\bibitem{zhu2020over}
\BIBentryALTinterwordspacing
G.~Zhu, J.~Xu, and K.~Huang, ``Over-the-air computing for wireless data aggregation in massive IoT,'' 2020. [Online]. Available:
  \url{https://arxiv.org/abs/2009.02181}
\BIBentrySTDinterwordspacing

\bibitem{amiri2020machine}
M.~M. Amiri and D.~G{\"u}nd{\"u}z, ``Machine learning at the wireless edge:
  Distributed stochastic gradient descent over-the-air,'' \emph{IEEE Trans.
  Signal Process.}, vol.~68, pp. 2155--2169, 2020.

\bibitem{zhu2020one}
G.~{Zhu}, Y.~{Du}, D.~{Gündüz}, and K.~{Huang}, ``One-bit over-the-air
  aggregation for communication-efficient federated edge learning: Design and
  convergence analysis,'' \emph{to appear in IEEE Trans. Wireless Commun.},
  2020.

\bibitem{zhang2020gradient}
\BIBentryALTinterwordspacing
N.~Zhang and M.~Tao, ``Gradient statistics aware power control for over-the-air
  federated learning in fading channels,'' 2020. [Online]. Available:
  \url{https://arxiv.org/abs/2003.02089}
\BIBentrySTDinterwordspacing

\bibitem{yang2020federated}
K.~Yang, T.~Jiang, Y.~Shi, and Z.~Ding, ``Federated learning via over-the-air
  computation,'' \emph{IEEE Trans. Wireless Commun.}, vol.~19, no.~3, pp.
  2022--2035, 2020.

\bibitem{lin2017deep}
\BIBentryALTinterwordspacing
Y.~Lin, S.~Han, H.~Mao, Y.~Wang, and W.~J. Dally, ``Deep gradient compression:
  Reducing the communication bandwidth for distributed training,'' 2017.
  [Online]. Available: \url{https://arxiv.org/abs/1712.01887}
\BIBentrySTDinterwordspacing

\bibitem{amiri2020federated}
M.~M. Amiri and D.~G{\"u}nd{\"u}z, ``Federated learning over wireless fading
  channels,'' \emph{IEEE Trans. Wireless Commun.}, vol.~19, no.~5, pp.
  3546--3557, 2020.

\bibitem{yang2019scheduling}
H.~H. Yang, Z.~Liu, T.~Q. Quek, and H.~V. Poor, ``Scheduling policies for
  federated learning in wireless networks,'' \emph{IEEE Trans. Commun.},
  vol.~68, no.~1, pp. 317--333, 2019.

\bibitem{haenggi2009stochastic}
M.~Haenggi, J.~G. Andrews, F.~Baccelli, O.~Dousse, and M.~Franceschetti,
  ``Stochastic geometry and random graphs for the analysis and design of
  wireless networks,'' \emph{IEEE J. Sel. Areas Commun.}, vol.~27, no.~7, pp.
  1029--1046, 2009.

\bibitem{kingman2005p}
J.~Kingman, \emph{\BIBforeignlanguage{English}{Poisson Processes}}, ser. Oxford
  Studies in Probability.\hskip 1em plus 0.5em minus 0.4em\relax United
  Kingdom: Oxford University Press, 1993.

\bibitem{chiu2013stochastic}
S.~N. Chiu, D.~Stoyan, W.~S. Kendall, and J.~Mecke, \emph{Stochastic geometry
  and its applications}.\hskip 1em plus 0.5em minus 0.4em\relax John Wiley \&
  Sons, 2013.

\bibitem{andrews2011tractable}
J.~G. Andrews, F.~Baccelli, and R.~K. Ganti, ``A tractable approach to coverage
  and rate in cellular networks,'' \emph{IEEE Trans. Commun.}, vol.~59, no.~11,
  pp. 3122--3134, 2011.

\bibitem{hosseini2016stochastic}
K.~Hosseini, W.~Yu, and R.~S. Adve, ``A stochastic analysis of network {MIMO}
  systems,'' \emph{IEEE Trans. Signal Process.}, vol.~64, no.~16, pp.
  4113--4126, 2016.

\bibitem{dhillon2012modeling}
H.~S. Dhillon, R.~K. Ganti, F.~Baccelli, and J.~G. Andrews, ``Modeling and
  analysis of {K}-tier downlink heterogeneous cellular networks,'' \emph{IEEE
  J. Sel. Areas Commun.}, vol.~30, no.~3, pp. 550--560, 2012.

\bibitem{soh2013energy}
Y.~S. Soh, T.~Q. Quek, M.~Kountouris, and H.~Shin, ``Energy efficient
  heterogeneous cellular networks,'' \emph{IEEE J. Sel. Areas Commun.},
  vol.~31, no.~5, pp. 840--850, 2013.

\bibitem{chetlur2017downlink}
V.~V. Chetlur and H.~S. Dhillon, ``Downlink coverage analysis for a finite
  3-{D} wireless network of unmanned aerial vehicles,'' \emph{IEEE Trans.
  Commun.}, vol.~65, no.~10, pp. 4543--4558, 2017.

\bibitem{1542405}
S.~P. {Weber}, {Xiangying Yang}, J.~G. {Andrews}, and G.~{de Veciana},
  ``Transmission capacity of wireless ad hoc networks with outage
  constraints,'' \emph{IEEE Trans. Inf. Theory}, vol.~51, no.~12, pp.
  4091--4102, 2005.

\bibitem{bernstein2018signsgd}
J.~Bernstein, Y.-X. Wang, K.~Azizzadenesheli, and A.~Anandkumar, ``signsgd:
  Compressed optimisation for non-convex problems,'' in \emph{Int. Conf. Mach.
  Learn. (ICML)}, pp. 560--569, Stockholm, Sweden, 2018.

\bibitem{Rappaport1996Wireless}
T.~S. Rappaport, \emph{Wireless Communications: Principles and Practice},
  1st~ed.\hskip 1em plus 0.5em minus 0.4em\relax IEEE Press, 1996.

\bibitem{zhu2019broadband}
G.~Zhu, Y.~Wang, and K.~Huang, ``Broadband analog aggregation for low-latency
  federated edge learning,'' \emph{IEEE Trans. Wireless Commun.}, vol.~19,
  no.~1, pp. 491--506, 2019.

\bibitem{konevcny2017stochastic}
\BIBentryALTinterwordspacing
J.~Kone{\v{c}}n{\`y}, ``Stochastic, distributed and federated optimization for
  machine learning,'' 2017. [Online]. Available:
  \url{https://arxiv.org/pdf/1707.01155.pdf}
\BIBentrySTDinterwordspacing

\bibitem{mcmahan2017communication}
B.~McMahan, E.~Moore, D.~Ramage, S.~Hampson, and B.~A. y~Arcas,
  ``Communication-efficient learning of deep networks from decentralized
  data,'' in \emph{Proc. Int. Conf. Artif. Int. Statist. (AISTATS)},  pp.
  1273--1282, Ft. Lauderdale, FL, Apr. 20 - 22, 2017.

\bibitem{goldsmith2005wireless}
A.~Goldsmith, \emph{Wireless Communications}.\hskip 1em plus 0.5em minus
  0.4em\relax USA: Cambridge University Press, 2005.

\bibitem{weber2007effect}
S.~Weber, J.~G. Andrews, and N.~Jindal, ``The effect of fading, channel
  inversion, and threshold scheduling on ad hoc networks,'' \emph{IEEE Trans.
  Inf. Theory}, vol.~53, no.~11, pp. 4127--4149, 2007.

\bibitem{cao2020optimized}
X.~Cao, G.~Zhu, J.~Xu, and K.~Huang, ``Optimized power control for over-the-air
  computation in fading channels,'' \emph{IEEE Trans. Commun.}, vol.~19,
  no.~11, pp. 7498--7513, 2020.

\bibitem{andrews2016primer}
\BIBentryALTinterwordspacing
J.~G. Andrews, A.~K. Gupta, and H.~S. Dhillon, ``A primer on cellular network
  analysis using stochastic geometry,'' 2016. [Online]. Available:
  \url{https://arxiv.org/pdf/1604.03183.pdf}
\BIBentrySTDinterwordspacing

\bibitem{chao1972negative}
M.-T. Chao and W.~Strawderman, ``Negative moments of positive random
  variables,'' \emph{J. Amer. Stat. Assoc.}, vol.~67, no. 338, pp. 429--431,
  1972.

\bibitem{stephan1945expected}
F.~F. Stephan, ``The expected value and variance of the reciprocal and other
  negative powers of a positive bernoullian variate,'' \emph{Ann. Math. Stat.},
  vol.~16, no.~1, pp. 50--61, 1945.

\end{thebibliography}

\end{document}